\definecolor{darkblue}{rgb}{0,0,0.5}
\DeclareRobustCommand{\shortto}{%
  \mathrel{\mathpalette\short@to\relax}%
}
\newcommand{\short@to}[2]{%
  \mkern2mu
  \clipbox{{.5\width} 0 0 0}{$\m@th#1\vphantom{+}{\shortrightarrow}$}%
  }
\newtheorem{theorem}{Theorem}
\newtheorem{definition}[theorem]{Definition}
\newtheorem{lemma}[theorem]{Lemma}
\newtheorem{proposition}[theorem]{Proposition}
\newenvironment{proof}[1][Proof]{\noindent\textbf{#1.} }{\ \rule{0.5em}{0.5em}}
\newcommand{\1}{^{(1)}}
\def\d{{\rm d}}
\def\>{\rangle}
\def\<{\langle}
\newcommand{\map}[1]{\mathcal{#1}}
\newcommand{\QZ}[1]{{{\textcolor{black}{#1}}}}
\def\be{\begin{equation}}
\def\ee{\end{equation}}
\def\ba{\begin{eqnarray}}
\def\ea{\end{eqnarray}}
\begin{document}
\title{Theory Framework of Multiplexed Photon-Number-Resolving Detectors
}

\date{\today}

\author{Xiaobin Zhao}
\affiliation{Ming Hsieh Department of Electrical and Computer Engineering,
University of Southern California, Los Angeles, CA 90089, USA}

\author{Hezheng Qin}
\author{Hong X. Tang}
\affiliation{Department of Electrical Engineering, Yale University, New Haven, CT 06511, USA}

\author{Linran Fan}
\affiliation{Chandra Department of Electrical and Computer Engineering, The University of Texas at Austin, Austin, Texas 78758, USA}

\author{Quntao Zhuang}
\email{qzhuang@usc.edu}
\affiliation{Ming Hsieh Department of Electrical and Computer Engineering,
University of Southern California, Los Angeles, CA 90089, USA}
\affiliation{Department of Physics and Astronomy, University of Southern California, Los Angeles, CA 90089, USA}

\begin{abstract}

Photon counting is a fundamental component in quantum optics and quantum information. However, implementing ideal photon-number-resolving (PNR) detectors remains experimentally challenging. Multiplexed PNR detection offers a scalable and practical alternative by distributing photons across multiple modes and detecting their presence using simple ON-OFF detectors, thereby enabling approximate photon-number resolution. In this work, we establish a theoretical model for such detectors and prove that the estimation error in terms of photon number moments decreases inverse proportionally to the number of detectors.  
Thanks to the enhanced PNR capability, multiplexed PNR detector provides an advantage in cat-state breeding protocols. Assuming a two-photon subtraction case, $7$dB of squeezing, and an array of 20 detectors of efficiency $95\%$, our calculation predicts fidelity $\sim0.88$ with a success probability $\sim 3.8\%$, representing orders-of-magnitude improvement over previous works. Similar enhancement also extends to cat-state generation with the generalized photon number subtraction. With experimentally feasible parameters, our results suggest that megahertz-rate cat-state generation is achievable using an on-chip array of \emph{tens} of ON-OFF detectors.

\end{abstract}

\maketitle




\section{Introduction.}

Accurate photon counting lies at the heart of modern quantum-information science. Number-resolving detectors are indispensable for linear-optics quantum computing schemes~\cite{kok2007linear} and for demonstration of quantum advantages such as Boson-sampling~\cite{aaronson2011computational}. The same capability underpins long-distance quantum communication~\cite{gisin2007quantum} and secure key distribution protocols~\cite{gisin2002quantum}, as well as quantum-enhanced metrology, where it enables sub-shot-noise phase sensing~\cite{slussarenko2017unconditional} and optimal noise sensing~\cite{shi2023ultimate}. Beyond measurement, photon-number resolution makes it possible to  generate bosonic logical states \cite{vasconcelos2010all,weigand2018generating,konno2024logical,zeng2025neural} that serve as resources for fault-tolerant computation \cite{brady2024advances,noh2018quantum,gottesman2001encoding,michael2016new,fukui2023efficient} and capacity-achieving quantum communication \cite{harrington2001achievable,noh2018quantum}. 
Despite their pivotal role, accurately resolving the exact number of incident photons remains technically challenging. 
\QZ{
Previous superconducting photon-number-resolving detectors include calorimetric bolometers such as Microwave Kinetic Inductance Detectors (MKIDs) and transition-edge sensor (TESs) and superconducting nanowire single-photon detector (SNSPD)-based schemes \cite{day2003broadband,miller2003demonstration,cahall2017multi,reddy2020superconducting,kokkoniemi2020bolometer,zhu2020resolving}. TESs can resolve from a few up to several tens of photons per pulse, and recent designs reach of order one hundred photons, but they require sub-$100\,\mathrm{mK}$ operation and typically offer only kilohertz to sub-megahertz count rates \cite{eaton2023resolution,morais2024precisely,li2025boosting}. SNSPD-based photon-number-resolving schemes, on the other hand, can operate at higher temperatures and very high repetition rates, but single-nanowire or few-segment devices that exploit the nonlinear response of the nanowire have so far resolved only a few photons per pulse, around four to five \cite{cahall2017multi,zhu2020resolving}. These tradeoffs between dynamic range, cooling requirements and speed motivate alternative architectures based on spatio-temporal multiplexing.
}

Recently, spatio-temporally multiplexed photon-number–resolving (MPNR) detectors have emerged as a scalable route to approximate true number resolution: the optical field is split into many modes, each monitored by a binary ON-OFF sensor \cite{paul1996photon,divochiy2008superconducting,mattioli2016photon,schmidt2019characterization,wollman2019kilopixel,tiedau2019high,elshaari2020dispersion}. Building on this architecture, Cheng {\it et al.} have demonstrated an on-chip device that multiplexes superconducting nanowires along a single waveguide, achieving 100 pixels~\cite{cheng2023100}. Thanks to such advances, MPNR technology now combines (i) resolution of up to 100 pixels, (ii) operation at modest cryogenic temperatures ($\approx$ 2 K), (iii) gigahertz-level count rate, (iv) near-unity detection efficiency $\approx$ 95$\%$, (v)  sub-50 ns timing jitter, and (vi) dark-count rates of only a few hertz, features that make it a robust and versatile tool for quantum-optical experiments.

\begin{figure*}
\centering
\includegraphics[width=0.99\textwidth]{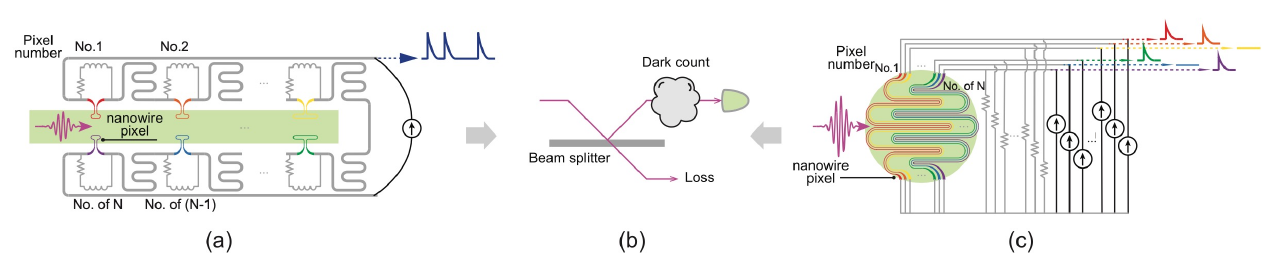} 
\caption{Main result of the paper. \QZ{ 
We develop a theoretical model that captures two realistic detection architectures, namely (a) a nanowire sequential detector that decodes pixel information from time-resolved electrical signals and (c) a parallel detector that decodes pixel information independently from separate electrical signals.}  As shown in (b), the theory model has an overall loss parameter and dark count parameter. \label{schematic_plot}}
\end{figure*}

In this work, we establish the theory framework for analyzing MPNR detectors. We evaluate the key performance metrics of MPNR detectors versus the number of multiplexed modes and demonstrate two applications: benchmarking photon-statistics reconstruction and generating non-Gaussian states. To begin with, we show the equivalence between two common architectures of MPNR detectors, the sequential and parallel schemes.
Further, we demonstrate that an MPNR detector with near-unit efficiency and binomially distributed dark counts can approximate a single PNR detector of comparable efficiency and dark counts. The approximation error is
$\mathcal O(1/n)$ for all the moments of the photon statistics, where $n$ is the number of ON-OFF detectors in multiplexing.  
As an example of photon statistics evaluation, we reconstruct the photon-number distribution of a squeezed vacuum from measurements taken with an MPNR detector. We introduce practical metrics to quantify the reconstruction error in this process. 

Moreover, we numerically analyze cat-state breeding with high-pixel MPNR detectors, charting the trade-off between fidelity and success probability as the detector size and efficiency vary. Assuming $7$dB of squeezing and an MPNR array of 20 detectors of efficiency $95\%$, we optimize cat state generation configuration, leading to an operation point with fidelity $\sim0.88$ and success probability $\sim 3.8\%$, representing orders-of-magnitude improvement over previous works~\cite{endo2023non}. With the experimental parameters in~\cite{cheng2023100}, our results suggest that megahertz rate cat-state generation is achievable using an on-chip MPNR array of tens of ON-OFF detectors. In addition, a comparable enhancement is observed in cat-state breeding when generalized photon-number subtraction is used.





\section{Model of multiplexed PNR detectors}


There are two main approaches to multiplexed PNR, as we depict in Fig.~\ref{schematic_plot} (a) and (c). In subplot (a), the incoming light goes through a sequence of nanowire detectors, where each detector splits a portion of the light to an ON-OFF detector. The photon number information is converted into a time-resolved electronic signal, as in Ref. \cite{cheng2023100}. In subplot (c), the input light is directly split into multiple portions, and each portion is fed into an ON-OFF detector, which registers a click or no click via an electronic signal, as in Refs.  \cite{liu2014photon,piacentini2015positive,grygar2022quantum}. \QZ{
From the viewpoint of our theory, these layouts are two realizations of the same multiplexed PNR model. In both cases, a single input mode is mapped by a linear optical network $U_{\rm BS}$ to $n$ ON–OFF detectors with splitting ratios $\{\eta_\ell\}$ and efficiencies $\{\kappa_\ell\}$. In the following discussion, Theorem~1 and all subsequent results are therefore formulated in terms of this unified beam-splitter model and apply equally to both architectures.
} 

To present the main theorem, we denote the portion of light split to each ON-OFF detector as $\eta_\ell$, the efficiency of each detector $\kappa_\ell$, and we assume uniform dark count $\epsilon\ll1$. In the 
configurations shown in Figs.~\ref{schematic_plot}(a) and (c), $\eta_\ell$ account for the cumulative effect of preceding splitting ratios. To make use of the large number of ON-OFF detectors, the design guarantees that $\eta_\ell\sim 1/n$, i.e., the incoming light is roughly evenly split to all ON-OFF detectors. \QZ{This near-uniform splitting is a standard design goal in time- and spatially-multiplexed PNR detectors and is well approximated in recent implementations~\cite{paul1996photon,divochiy2008superconducting,mattioli2016photon,schmidt2019characterization,wollman2019kilopixel,tiedau2019high,elshaari2020dispersion,cheng2023100}.} With these notations, we have our major results as follows. 
\begin{theorem}\label{theo1}
When detecting a quantum state with finite energy, a multiplexed PNR detector with ON-OFF detections can approximate an ideal PNR detector with efficiency $\overline{\kappa}=\sum_{j=1}^n \kappa_j\eta_j$ and binomial-distributed dark count with average $\overline{\epsilon}=n\epsilon$. The approximation error scales as $O(1/n)$ for all moments of the photon-number distribution.
\end{theorem}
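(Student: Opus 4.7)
The plan is to realize both detection outcomes on a common probability space and show that their difference is a non-negative ``collision excess'' whose moments decay as $O(1/n)$. For a Fock input $|N\rangle$, the beam-splitter network distributes the $N$ photons multinomially across the $n$ arms, so I take $K_1,\ldots,K_n$ to be the numbers of detected photons at each arm (each photon succeeding at arm $\ell$ with probability $\kappa_\ell\eta_\ell$), and take $D_\ell\sim\mathrm{Bern}(\epsilon)$ to be independent dark-count indicators. On this sample space the MPNR click count is $M=\sum_\ell \mathbf{1}\{K_\ell+D_\ell\ge 1\}$, while $P:=\sum_\ell(K_\ell+D_\ell)$ has exactly the target distribution $\mathrm{Bin}(N,\overline{\kappa})+\mathrm{Bin}(n,\epsilon)$ of the claimed effective PNR detector. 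A general input is then handled by averaging over its Fock-diagonal weights, provided the estimates below are uniform in $N$ up to the moment order of interest.

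The second step is to bound the collision excess $\Delta:=P-M=\sum_\ell(K_\ell+D_\ell-1)_+\ge 0$. Using the elementary inequality $(m-1)_+\le \binom{m}{2}$ for integers $m\ge 0$, together with $\mathbb{E}[K_\ell(K_\ell-1)\mid N]=N(N-1)(\kappa_\ell\eta_\ell)^2$, $\mathbb{E}[K_\ell D_\ell\mid N]=N\kappa_\ell\eta_\ell\epsilon$, and $\mathbb{E}[D_\ell(D_\ell-1)]=0$, the design hypothesis $\eta_\ell=O(1/n)$ together with boundedness of $\overline{\epsilon}=n\epsilon$ gives $\mathbb{E}[\Delta\mid N]=O(N^2/n)+O(N\overline{\epsilon}/n)$. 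More generally, the only way for $(K_\ell+D_\ell-1)_+$ to be nonzero is the $O(1/n^2)$-probability event that arm $\ell$ receives two or more excitations, and on that event the random variable is bounded by a polynomial in $N$; this gives $\mathbb{E}[(K_\ell+D_\ell-1)_+^k\mid N]=O(1/n^2)$ for every fixed $k$, so summing the $n$ diagonal terms yields the $O(1/n)$ scaling, while off-diagonal contributions factorize approximately and are of higher order.

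Finally, to turn the $\Delta$ estimate into a bound on moments of $M$, I would use the pointwise inequality $|M^k-P^k|\le k\, P^{k-1}\Delta$, which follows from $0\le M\le P$ by a discrete mean-value argument. Splitting $P=(K_\ell+D_\ell)+P'_\ell$ isolates the $\ell$th arm from the remaining $n-1$, and a Cauchy-Schwarz step controls the mild multinomial correlation between $K_\ell$ and $P'_\ell$; combined with the $O(1/n^2)$ single-arm excess and the $O(1)$ moments of $P'_\ell$ under finite energy and bounded $\overline{\epsilon}$, this yields $\mathbb{E}[P^{k-1}(K_\ell+D_\ell-1)_+]=O(1/n^2)$. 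Summing over the $n$ arms gives $|\mathbb{E}[M^k]-\mathbb{E}[P^k]|=O(1/n)$, as claimed. The main obstacle will be tracking the multinomial dependence carefully enough that off-diagonal pieces of $\mathbb{E}[\Delta^k]$ are genuinely higher order in $1/n$ rather than competing with the diagonal rate; a secondary technical point is interpreting ``finite energy'' as finiteness of all photon-number moments $\mathbb{E}[N^m]$ of the input state (or at least those up to the $k$ of interest), which is the natural setting in which ``all moments of the photon-number distribution'' is a well-defined object.
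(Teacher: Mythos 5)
Your proposal is correct in outline but follows a genuinely different route from the paper. The paper splits the problem in two: it first peels off the efficiencies by writing $\hat a_j=\sqrt{\kappa_j\eta_j}\,\hat a+\mathrm{vac}$ as a single pure-loss channel of transmissivity $\overline{\kappa}$ followed by a lossless balanced splitter, treats the dark counts separately as an independent binomial addition, and then proves the remaining lossless statement (Lemma~\ref{theo:p_func_higher_moment}) by expanding the input in the coherent-state/P-function representation, computing the click-count moments for a coherent state split $n$ ways (each arm clicking with probability $1-e^{-|\alpha|^2/n}$), and Taylor-expanding to exhibit the $O(1/n)$ defect before integrating against $P(\alpha)$. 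You instead work in the Fock basis: you observe that the pulled-back POVM is Fock-diagonal (so only the diagonal weights of the input matter), realize the MPNR click count $M$ and the target count $P\sim\mathrm{Bin}(N,\overline{\kappa})+\mathrm{Bin}(n,\epsilon)$ on one multinomial sample space, and bound the nonnegative collision excess $\Delta=P-M$ by second factorial moments. Your coupling buys a unified treatment of loss and dark counts (in particular it correctly accounts for the photon--dark-count coincidence term $\mathbb{E}[K_\ell D_\ell]=N\kappa_\ell\eta_\ell\epsilon$, which the paper's separate treatment glosses over) and avoids the delicate manipulation of P-functions, which are highly singular for the nonclassical states of interest; the paper's route, in exchange, produces the explicit closed-form click distributions that feed its later applications.

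One step of your sketch needs repair: in bounding $\mathbb{E}\bigl[(P'_\ell)^{k-1}(K_\ell+D_\ell-1)_+\bigr]$, a direct Cauchy--Schwarz gives $\sqrt{\mathbb{E}[(P'_\ell)^{2k-2}]}\cdot\sqrt{\mathbb{E}[(K_\ell+D_\ell-1)_+^2]}=O(1)\cdot O(1/n)$ per arm, hence only $O(1)$ after summing over $n$ arms --- the square root destroys the rare-event rate. The fix is to condition on the arm-$\ell$ variables instead: since $P'_\ell\le N+\sum_j D_j$ deterministically, $\mathbb{E}[(P'_\ell)^{k-1}\mid K_\ell,D_\ell]$ is bounded by a polynomial in $N$ and $\overline{\epsilon}$ uniformly in $n$, so the product bound $\mathbb{E}[(P'_\ell)^{k-1}(K_\ell+D_\ell-1)_+]\le C(N,\overline{\epsilon})\,\mathbb{E}[(K_\ell+D_\ell-1)_+]=O(1/n^2)$ holds and summing over arms gives the claimed $O(1/n)$. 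With that substitution, and with ``finite energy'' read as finiteness of the relevant photon-number moments of the input (needed for uniformity when averaging the Fock-conditional bounds), your argument goes through.
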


By definition, the dark count of the detector is the photon counts on vacuum input. Therefore, the dark count can be obtained directly by summing up the uniformly distributed dark count of $n$ detectors, leading to a binomial-distributed dark count of average photon number $\overline{\epsilon}=n\epsilon$. \QZ{ For the case of non-uniform dark counts with $\{\epsilon_j \ll 1\}$, as long as $\sum_j \epsilon_j \ll 1$, our analysis still applies.
 } Now we can proceed to analyze the error from the ON-OFF detectors in terms of photon-number efficiency, where we can assume no dark count for simplicity. 

For the input mode $\hat a$ to the multiplexed PNR detector, the mode at the $j$-th ON-OFF detectors \QZ{is}
$
\hat{a}_j=\sqrt{\kappa_j \eta_j}\hat{a} +{\rm vac}
$,
where `vac' denotes vacuum modes. One can obtain this mode via first performing a single pure loss channel with transmissivity $\bar \kappa =\sum_{j=1}^n \kappa_j \eta_j$ and then a lossless beamsplitter array with the weights $\{\kappa_j \eta_j/\bar \kappa\}_{j=1}^n$. \QZ{ In general, we can combine the internal detection efficiency of the detector with the absorption efficiency into a single overall efficiency. Here, the 
efficiency is assumed to be unity for simplicity of analysis.
 }Therefore, the multiplexed PNR detector can be decomposed into a two-step process: In the first step, the input goes through a loss of $\bar \kappa$; In the second step, the mode goes through a lossless unit efficiency ON-OFF detector array, where a perfect beamsplitter array splits the mode into $n$ portions, one for each detector.  
Now we just need to show that a lossless network of unit efficiency ON-OFF detectors can approximate an ideal PNR detector, which is proven in the following lemma.
\begin{lemma}\label{theo:p_func_higher_moment} (lossless ON-OFF array approximating a PNR)
Given an arbitrary quantum state, applying a lossless $n$-plexed PNR detector without dark count to estimate the $h$-th moment of photon number ($h=\mathcal O(1)$ independent of $n$), the estimation error is:
\begin{align}
N^h_{\rm mpnr}-N^h&= \mathcal O \left(\frac {1} n\right), 
\end{align}
where $N^h=\<(\hat a^\dag \hat a)^h\>$ is the true value of the $h$-th order moment of photon number. 
\end{lemma}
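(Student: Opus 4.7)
My plan is to decompose the multiplexed click observable into the exact input photon number minus an ``excess collision'' operator, binomially expand the $h$-th moment using the commutativity of these two pieces, and then show that every cross moment involving the excess operator scales as $1/n$ using the multinomial statistics of the lossless splitter array.

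First I would set notation: write the click observable as $\hat{N}_{\rm mpnr} = \sum_{j=1}^n \hat{\Pi}_j$ with $\hat{\Pi}_j = \mathbb{I} - |0\rangle\langle 0|_j$ the click indicator on output mode $\hat{a}_j$, and introduce the mode-wise excess operator $\hat{E}_j = \hat{n}_j - \hat{\Pi}_j = \sum_{k\ge 2}(k-1)|k\rangle\langle k|_j$. Because the splitter network is lossless, $\sum_j \hat{n}_j = \hat{a}^\dagger\hat{a} \equiv \hat{n}$, so $\hat{N}_{\rm mpnr} = \hat{n} - \hat{E}$ with $\hat{E} = \sum_j \hat{E}_j$, and $[\hat{n},\hat{E}]=0$ since both are diagonal in the output Fock basis. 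Commutativity yields
\[
\hat{N}_{\rm mpnr}^h - \hat{n}^h = \sum_{k=1}^h \binom{h}{k}(-1)^k \hat{n}^{h-k}\hat{E}^k,
\]
so it suffices to prove $\langle \hat{n}^{h-k}\hat{E}^k\rangle = \mathcal{O}(1/n)$ for each $k\ge 1$.

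The core estimate is a multinomial bound. I would condition on the total input photon number $N$; the lossless splitter distributes photons multinomially across the $n$ modes with per-mode weights $\sim 1/n$. The probability that any given mode receives two or more photons is $\mathcal{O}(N^2/n^2)$, and using the trivial bound $\hat{E}_j \le N-1$ on the $N$-photon sector gives $\langle \hat{E}_j^p\rangle_N = \mathcal{O}(N^{p+2}/n^2)$. Expanding $\hat{E}^k = \sum_{j_1,\ldots,j_k}\hat{E}_{j_1}\cdots\hat{E}_{j_k}$ and grouping the tuples by the number $m$ of distinct indices, the $m$-distinct-mode contribution is at most $\mathcal{O}(N^{k+2m}/n^m)$; the $m=1$ diagonal sum therefore dominates and gives $\langle \hat{E}^k\rangle_N = \mathcal{O}(N^{k+2}/n)$. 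Averaging against the input photon statistics yields $\langle \hat{n}^{h-k}\hat{E}^k\rangle = \mathcal{O}(\langle\hat{n}^{h+2}\rangle/n) = \mathcal{O}(1/n)$ for any finite-energy state whose $(h+2)$-th photon-number moment is finite.

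The hardest part will be the combinatorial bookkeeping for $\langle \hat{E}^k\rangle_N$: one must verify carefully that the contribution from $m\ge 2$ simultaneously colliding modes really carries an extra factor $1/n^{m-1}$ relative to the $m=1$ term, so that every such event is strictly subleading in $1/n$ for any fixed $h$. A secondary subtlety is clarifying ``finite energy''; since the constants hidden in $\mathcal{O}(1/n)$ depend on $\langle\hat{n}^{h+2}\rangle$, the hypothesis should really be read as existence of photon-number moments up to order $h+2$, which is automatic for the physically relevant states considered later in the paper. Finally, the argument transfers verbatim to non-uniform splittings with $\eta_j\sim 1/n$: only the multinomial weights change, not the $1/n$ scaling of the error.
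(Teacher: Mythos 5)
Your proposal is correct, and it takes a genuinely different route from the paper's own proof. The paper (Appendix A) expands the input state in the Glauber--Sudarshan / coherent-state representation, uses the fact that coherent states factorize through the lossless splitter so that each single-detector click expectation reduces to the scalar $1-e^{-|\alpha|^2/n}$, rewrites $\bigl(\sum_j\widetilde I_j\bigr)^h$ as a combinatorial sum over the click-number projectors, and Taylor-expands in $1/n$ under the $\mathrm{d}^2\alpha$ integral. You instead work directly in the output Fock basis via the exact operator identity $\hat N_{\rm mpnr}=\hat n-\hat E$ with $\hat E=\sum_j\sum_{k\ge2}(k-1)\,|k\rangle\langle k|_j\ge0$, exploit that both pieces are diagonal there, and control every cross moment by a multinomial collision bound conditioned on the total photon number. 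Each approach buys something. Yours sidesteps the singularity of the $P$-function for nonclassical mixed states (the paper's interchange of the $1/n$ expansion with the phase-space integral is delicate precisely there), makes the sign of the bias manifest ($\hat E\ge 0$ implies the MPNR systematically undercounts, consistent with the paper's $N^h-\mathcal O(h^2/n)$), and isolates the precise regularity hypothesis, namely finiteness of $\langle\hat n^{h+2}\rangle$ rather than a vague ``finite energy.'' The paper's route, in exchange, produces an explicit asymptotic expansion with identifiable coefficients (the $-m|\alpha|^2/2n$ correction term), which is then reused in the squeezed-vacuum and cat-breeding calculations. Your key combinatorial step is sound: conditioned on $N$ input photons the output diagonal is exactly multinomial, the contribution to $\langle\hat E^k\rangle_N$ from tuples with $m$ distinct modes is $\mathcal O\!\left(n^m\cdot N^{k+2m}/n^{2m}\right)=\mathcal O\!\left(N^{k+2m}/n^{m}\right)$, so the $m=1$ diagonal dominates and gives $\mathcal O(N^{k+2}/n)$ as you claim; the extension to non-uniform weights $\eta_j=\Theta(1/n)$ goes through unchanged.
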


\begin{figure}[t]
\centering\includegraphics[width=0.455\textwidth,angle=0,trim=0 0 0 0,clip]{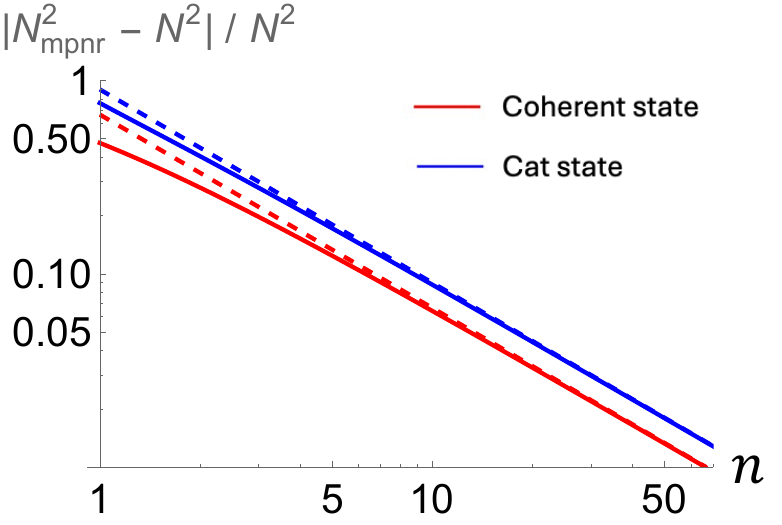}
\caption{Ratio of estimation error to the true value for higher moments of photon number distribution. The $x$-axis represents the number of ON-OFF detectors, and the $y$-axis shows the ratio of the estimation error, $\left|N^2_{\rm mpnr}-N_{\rm true}^2\right|$, to the true value $N^2$. Numerical results are presented for a coherent state $|\underline{\sqrt 2/2}\>$  (solid red line, with a scaling trend of $1/n$ shown as a red dashed line) and a cat state  $\propto |\underline{\sqrt 2/2}\>+|\underline{-\sqrt 2/2}\>$  (solid blue line, with a scaling trend of $1/n$ shown as a blue dashed line). Both axes are plotted on a logarithmic scale.  
}
\label{fig:cohernt_state}
\end{figure}

The concrete proof of Lemma  \ref{theo:p_func_higher_moment} can be found in Appendix \ref{app:proof of lemma 2}. 
In Fig. \ref{fig:cohernt_state}, we examine the error $\left|N^2_{\rm mpnr}-N^2\right|$ versus the true value of $N^2$ for two types of quantum states as an example.
First, we consider a coherent state, perhaps the most common quantum state in quantum optics. A coherent state with amplitude $\alpha$ is specified by the number basis wave function $|\underline{\alpha}\>:=\sum_{j=0}^\infty e^{-|\alpha|^2/2} \alpha^j/\sqrt{j!}|j\>$, where $\ket{j}$ denotes the Fock number state and we add underline in  coherent state to distinguish them from number states. To go beyond simple states, we also consider the Schrödinger cat states \cite{dakna1997generating,wenger2004pulsed,ourjoumtsev2006generating,lund2004conditional,takase2021generation,endo2023non,takase2023gottesman,takase2024generation}, $|\rm cat\>_{\alpha}^\pm \propto |\underline{\alpha}\>\pm|\underline{-\alpha}\>$, which are superposition of two coherent states. 
It can be seen that the scaling of the quantity $\left|N^2_{\rm mpnr}-N^2\right|/N^2$ converges to $\mathcal O(1 / n)$ as the number of detectors increases. 








\section{Measurement statistics in application examples}

For simplicity, let us  consider the MPNR detector operates through a balanced beamsplitter. 
Then, given an arbitrary pure state $|\psi\>$, the probability of obtaining a measurement outcome $k$ with an MPNR detector is given by $p_{k}= \<\psi'|P_{{\rm ON-OFF},k}|\psi'\>$ where $|\psi'\>=U_{\rm BS} |\psi\>|0\>^{\otimes n-1}$ is the state after applying a balanced beamsplitter. Here $P_{{\rm ON-OFF},k}=\bigoplus_{\sigma_g\in \map S_{n,k}}U_{\sigma_g} \left(\widetilde I^{\otimes k}\otimes |0\>\<0|^{\otimes n-k} \right) U_{\sigma_g}^\dag$ is a projector to the subspace associated with $k$ clicks, where $\map S_{n,k}$ represents the permutations corresponding to choosing $k$ out of $n$ objects, $U_{\sigma_g}$ is the unitary that implements the permutation between modes, \QZ{$\widetilde I=I-|0\>\<0|$ denotes the projector onto the subspace orthogonal to the vacuum state.} We will evaluate the performance of MPNR in applications.
To focus on the effects unique to MPNR, we will ignore dark count in the following analyses. A central quantum state examined in this section is the single-mode squeezed vacuum state, $\ket{r}_{\rm sq}=\frac{1}{\sqrt{\cosh r}} \sum_{n=0}^\infty (-\tanh r)^n \frac{\sqrt{(2n)!}}{2^n n!}|2n\>$, which only has occupation on even photon number states. Here $r$ is the squeezing amplitude, corresponding to squeezed quadrature variance $e^{-2r}$ below the vacuum noise level. Squeezed vacuums are versatile in quantum communication and quantum sensing, and also supply crucial resources in quantum computation.



\begin{figure}[t]
\centering
\includegraphics[width=0.48\textwidth,trim=2 12 2 2,clip,angle=0]{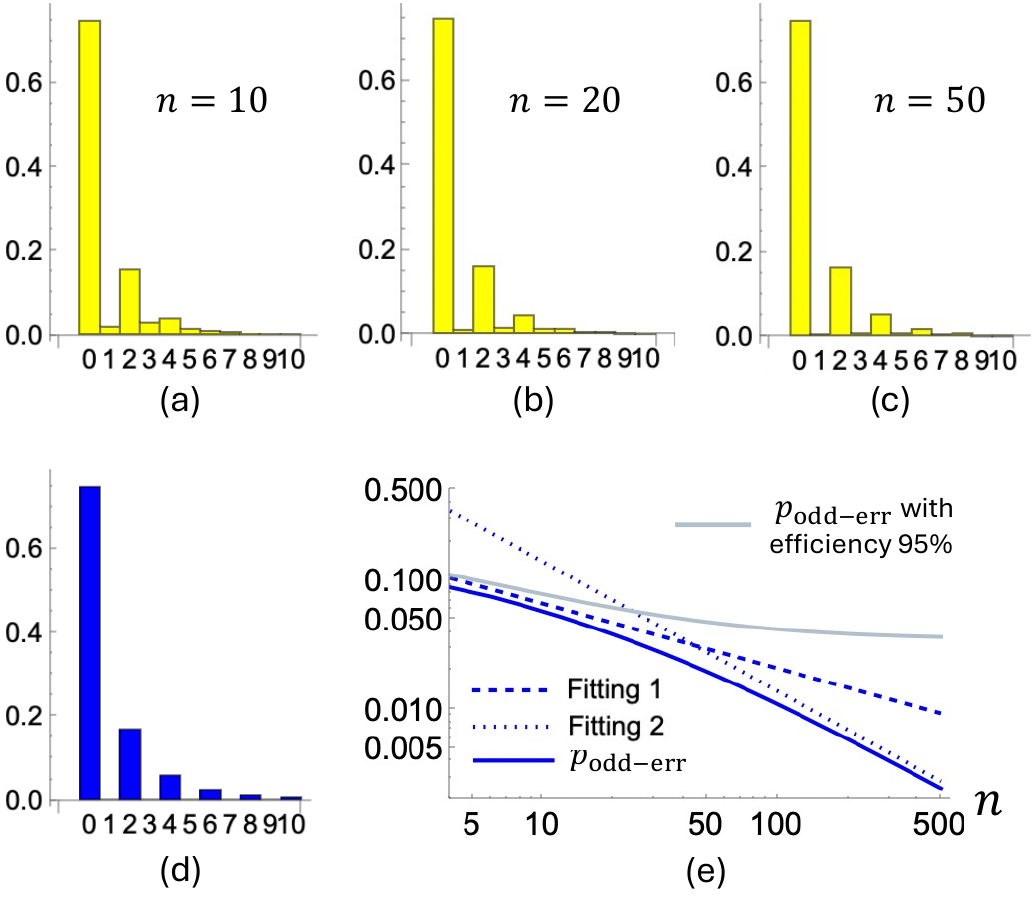}
\caption{Observed photon distributions for squeezed vacuum states. (a), (b), and (c) show the observed photon number distributions $\{p_k\}$ for ON-OFF detector numbers $n=10,20$, and $50$, respectively, for a single-mode squeezed vacuum state with a squeezing level of $7$dB. (d) Displays the true photon number distribution. (e) Depicts the odd-photon error probability  $p_{\rm odd-err}$ where the solid blue line represents calculated values, the solid light blue line denotes the error probability $p_{\rm odd-err}$ with detector efficiency $95\%$, the  dashed blue line indicates a fitting function $\sim\mathcal O(n^{-0.5})$,  the dotted blue line corresponds to $\sim\mathcal O(n^{-1})$.    
}
\label{fig:calibration_squeezing}
\end{figure}

\subsection{Detecting squeezed vacuum photon distribution}

To begin with, we consider the photon statistics of the single-mode squeezed vacuum on an MPNR detector. Due to the special photon number statistics of squeezed vacuum, it serves as a potential approach of benchmarking MPNR detectors. To evaluate the performance of photon-statistics reconstruction, let us first consider unit detection efficiency and neglect dark counts.  Explicitly, the following proposition is useful:

\begin{proposition}[Measuring squeezed states]\label{pro:pnr_cali}
With a noiseless balanced MPNR detector (without loss and dark count), the observed photon number distribution of a squeezed vacuum state $|r\>_{\rm sq}$ with squeezing amplitude $r$ is: 
\begin{align}
p_{{\rm sq},k}
= & \frac{n!}{k!(n-k)!} \sum_{\ell=0}^\infty \frac{1}{\cosh r} \frac{(\tanh r)^{2\ell}  }{2^{2\ell}(\ell!)^2} \frac{[(2\ell)!]^2}{n^{2\ell}}\nonumber \\ 
&\times \sum_{j_{1}+\cdots+j_k=2\ell\atop j_{1},\cdots,j_k=1,\cdots,2\ell}  \frac{1}{j_{1}!\cdots j_k!} .
\end{align}
\end{proposition}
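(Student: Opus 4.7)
The plan is to compute $p_{{\rm sq},k}=\langle\psi'|P_{{\rm ON\text{-}OFF},k}|\psi'\rangle$ directly in the Fock basis, exploiting the fact that a balanced beamsplitter acts by a single mode-mixing unitary on the input annihilation operator. First I would expand $|r\rangle_{\rm sq}=\frac{1}{\sqrt{\cosh r}}\sum_{\ell=0}^\infty(-\tanh r)^\ell\frac{\sqrt{(2\ell)!}}{2^\ell\ell!}|2\ell\rangle$ and distribute the action of $U_{\rm BS}^\dagger$ termwise. Since the balanced splitter sends $\hat a^\dagger\mapsto\frac{1}{\sqrt n}\sum_{j=1}^n\hat b_j^\dagger$, the image of the single-mode Fock state $|2\ell\rangle\otimes|0\rangle^{\otimes n-1}$ is
\begin{equation}
\frac{1}{\sqrt{(2\ell)!}}\Bigl(\tfrac{1}{\sqrt n}\textstyle\sum_{j=1}^n\hat b_j^\dagger\Bigr)^{2\ell}|0\rangle^{\otimes n}
=\frac{\sqrt{(2\ell)!}}{n^\ell}\!\!\sum_{j_1+\cdots+j_n=2\ell}\!\!\frac{|j_1,\ldots,j_n\rangle}{\sqrt{j_1!\cdots j_n!}}
\end{equation}
by the multinomial theorem together with $(\hat b_i^\dagger)^{j_i}|0\rangle=\sqrt{j_i!}|j_i\rangle$.

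Next I would compute $p_{{\rm sq},k}$ by observing that terms with distinct $\ell$ live in orthogonal total-photon-number sectors and therefore do not interfere in the probability. Thus the problem reduces to a weighted sum, for each $\ell$, of $|\langle j_1,\ldots,j_n|\psi'_\ell\rangle|^2$ over those occupation patterns $(j_1,\ldots,j_n)$ with exactly $k$ strictly positive entries (the remaining $n-k$ entries being zero, which is the content of the projector $P_{{\rm ON\text{-}OFF},k}$). The sum over which subset of $k$ modes click contributes the prefactor $\binom{n}{k}=n!/[k!(n-k)!]$, and the restricted multinomial sum collapses to $\sum_{j_1+\cdots+j_k=2\ell,\,j_i\ge 1}1/(j_1!\cdots j_k!)$, matching the form in the statement.

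Finally I would combine the Fock-coefficient modulus squared $|c_\ell|^2=(\cosh r)^{-1}(\tanh r)^{2\ell}(2\ell)!/[2^{2\ell}(\ell!)^2]$ with the multinomial factor $(2\ell)!/n^{2\ell}$ coming from $|\psi'_\ell\rangle$. Assembling these pieces yields the claimed expression exactly, with the moments $[(2\ell)!]^2/n^{2\ell}$ and the factor $(\tanh r)^{2\ell}/[2^{2\ell}(\ell!)^2]$ in the correct places.

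The main obstacle is mostly notational: carefully tracking the constraint $j_i\ge 1$ (rather than $j_i\ge 0$) is crucial, because it is exactly this restriction that encodes ``exactly $k$ clicks'' as opposed to ``at most $k$ clicks.'' A minor verification point is to confirm normalization, i.e. that $\sum_{k=0}^n p_{{\rm sq},k}=1$; this can be checked at the level of each $\ell$ by resumming the restricted multinomial sums over $k$ to recover the full unrestricted multinomial identity $\sum_{j_1+\cdots+j_n=2\ell}(2\ell)!/(j_1!\cdots j_n!)=n^{2\ell}$, which cancels the $n^{2\ell}$ in the denominator and leaves the squeezed-vacuum normalization $\sum_\ell|c_\ell|^2=1$.
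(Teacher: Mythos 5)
Your proposal is correct and arrives at exactly the paper's intermediate state and final formula, but it reaches the key multimode expansion by a different (and arguably more direct) route. The paper first resolves the squeezed vacuum over coherent states, $|r\rangle_{\rm sq}=\int\frac{\d^2\alpha}{\pi}\frac{1}{\sqrt{\cosh r}}e^{-\tanh r\,\alpha^{*2}/2}e^{-|\alpha|^2/2}|\underline{\alpha}\rangle$, uses the fact that the balanced splitter maps $|\underline{\alpha}\rangle\mapsto|\underline{\alpha/\sqrt n}\rangle^{\otimes n}$, and then re-expands in the Fock basis and performs the angular integral to collapse onto the even-photon sectors, obtaining $\sum_\ell c_\ell|\Phi_\ell\rangle$ with $|\Phi_\ell\rangle=\sum_{j_1+\cdots+j_n=2\ell}\frac{\sqrt{(2\ell)!}}{n^\ell\sqrt{j_1!\cdots j_n!}}|j_1,\ldots,j_n\rangle$. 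You instead obtain the same $|\Phi_\ell\rangle$ in one line by applying the multinomial theorem to $(\frac{1}{\sqrt n}\sum_j\hat b_j^\dagger)^{2\ell}|0\rangle^{\otimes n}$, which avoids the coherent-state integral entirely; from that point on (no interference between total-photon-number sectors because $P_{{\rm ON\text{-}OFF},k}$ is diagonal in the Fock basis, the $\binom{n}{k}$ factor from permutation symmetry of $|\Phi_\ell\rangle$, and the restricted multinomial sum with $j_i\ge 1$ encoding ``exactly $k$ clicks'') your argument coincides with the paper's. The paper's coherent-state route buys reusability — the same representation drives the cat-state-breeding calculation in its Appendix C — while your Fock-basis derivation is the more economical proof of this particular proposition; your normalization check via the unrestricted multinomial identity is a nice addition not present in the paper.
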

The detailed proof of Proposition \ref{pro:pnr_cali} is shown in Appendix \ref{app:calibration_squeezied state}.
In Figs.~\ref{fig:calibration_squeezing} (a)-(d), we present a numerical calculation comparing the true photon distribution (subplot d) to the observed photon distribution (subplots a,b,c, \QZ{ precise up to the truncation error of the Hilbert space}). As shown, the probability of observing odd-photon numbers decreases as the number of ON-OFF detectors increases. 

Since the ideal squeezed vacuum consists solely of even-photon-number components, the detection of odd photon number can capture the non-ideality of the detector. In this regard, a simple way to quantify the performance is to evaluate the probability \QZ{$p_{{\rm odd-err}}:=\sum_{j=0}^\infty p_{{\rm sq},2j+1}$} of obtaining odd photon number.
A numerical calculation of the probability $p_{\rm odd-err}$ is presented in Fig. \ref{fig:calibration_squeezing}(e). The results demonstrate that the odd-photon error probability for squeezed vacuum states decreases at a rate faster than $n^{-0.5}$ but slower than $n^{-1}$ with $n$ being the number of ON-OFF detectors in the MPNR system. \QZ{ In addition to the lossless case, we also simulate a realistic scenario
where each ON--OFF detector has an efficiency of \(95\%\). As expected,
the resulting error probability \(p_{\rm odd\text{-}err}\) converges to a constant as the number of ON--OFF detectors increases. } Note that this error probability is applicable to various foundational quantum states, including even Schr\"{o}dinger cat states and squeezed even Fock states. 
Furthermore, for states with support only on odd photon numbers, an analogous “even-photon error” can be similarly defined.

\QZ{ Note that the odd-number probability $p_{\rm odd-err}$ converges to $\mathcal O(1/n)$, i.e., 
\begin{align}
\left. p_{\rm odd-err}\right|_{n\gg 1}=\mathcal O\left(\frac 1 n \right)
\end{align}
due to the relations $2\ell \ge k+1$ when $k$ is odd, $n!/(n-k)!=\mathcal O(n^k)$, and the fact that the summand 
\begin{align}
\sum_{k\ {\rm is\ odd} }\sum_{\ell=0}^\infty \frac{(\tanh r)^{2\ell}  }{\cosh r 2^{2\ell}(\ell!)^2[(2\ell)!]^2} \sum_{j_{1}+\cdots+j_k=2\ell\atop j_{1},\cdots,j_k=1,\cdots,2\ell}  \frac{1}{j_{1}!\cdots j_k!}
\end{align}
is not a function of $n$. 
}

\begin{figure}
\centering\includegraphics[width=0.47\textwidth,trim=5 5 5 5,clip,angle=0]{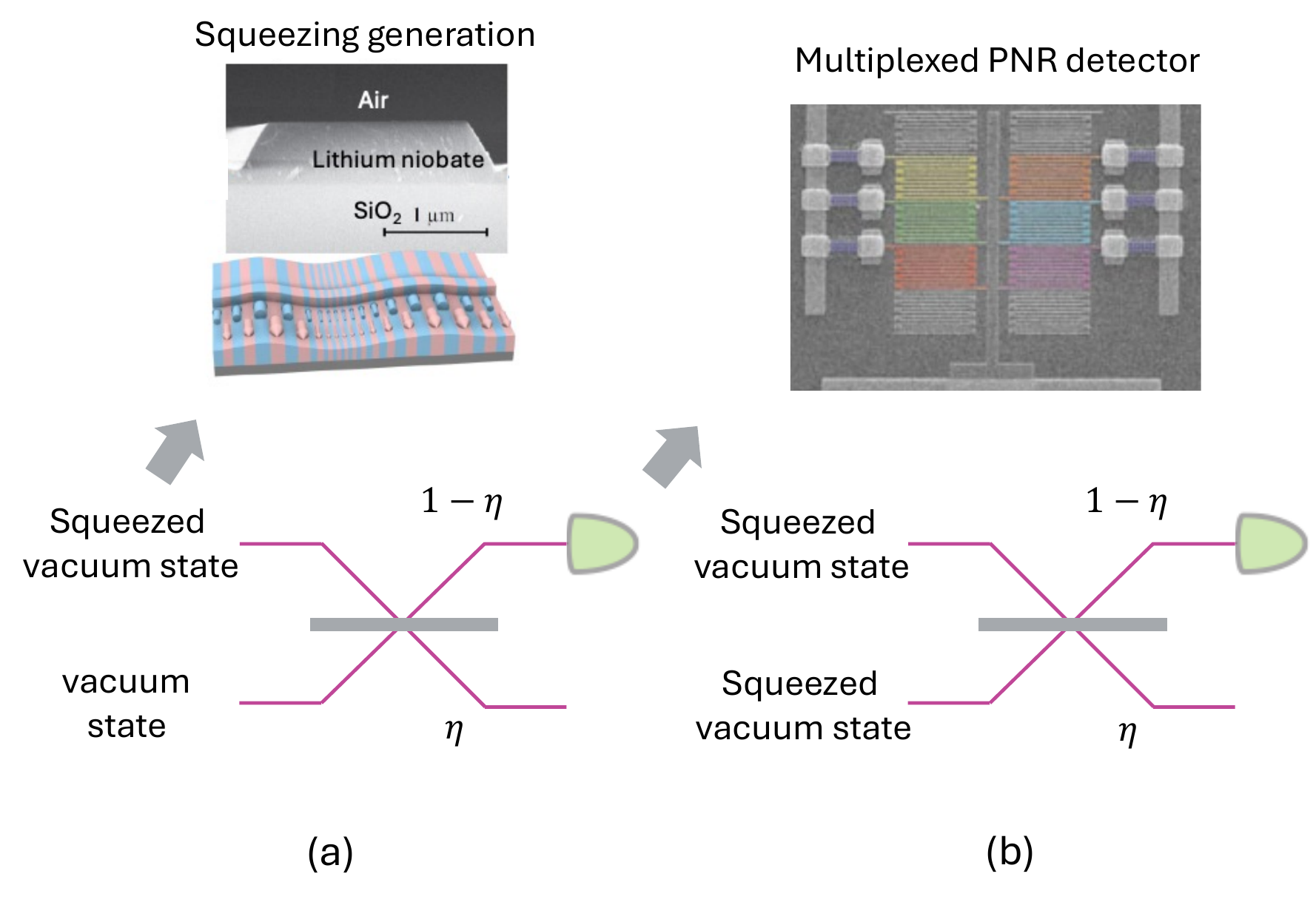}
\caption{Schematic of the cat state generation protocol with MPNR detection. (a) Photon subtraction (b) Generalized photon subtraction. The beamsplitter splits $\eta$ portion of light towards the cat state output. The MPNR detector figure is adopted from Ref.~\cite{divochiy2008superconducting}. }
\label{fig:scheme_cat}
\end{figure}

\begin{figure*}
\centering\includegraphics[width=0.73\textwidth,trim=2 0 0 2,clip,angle=0]{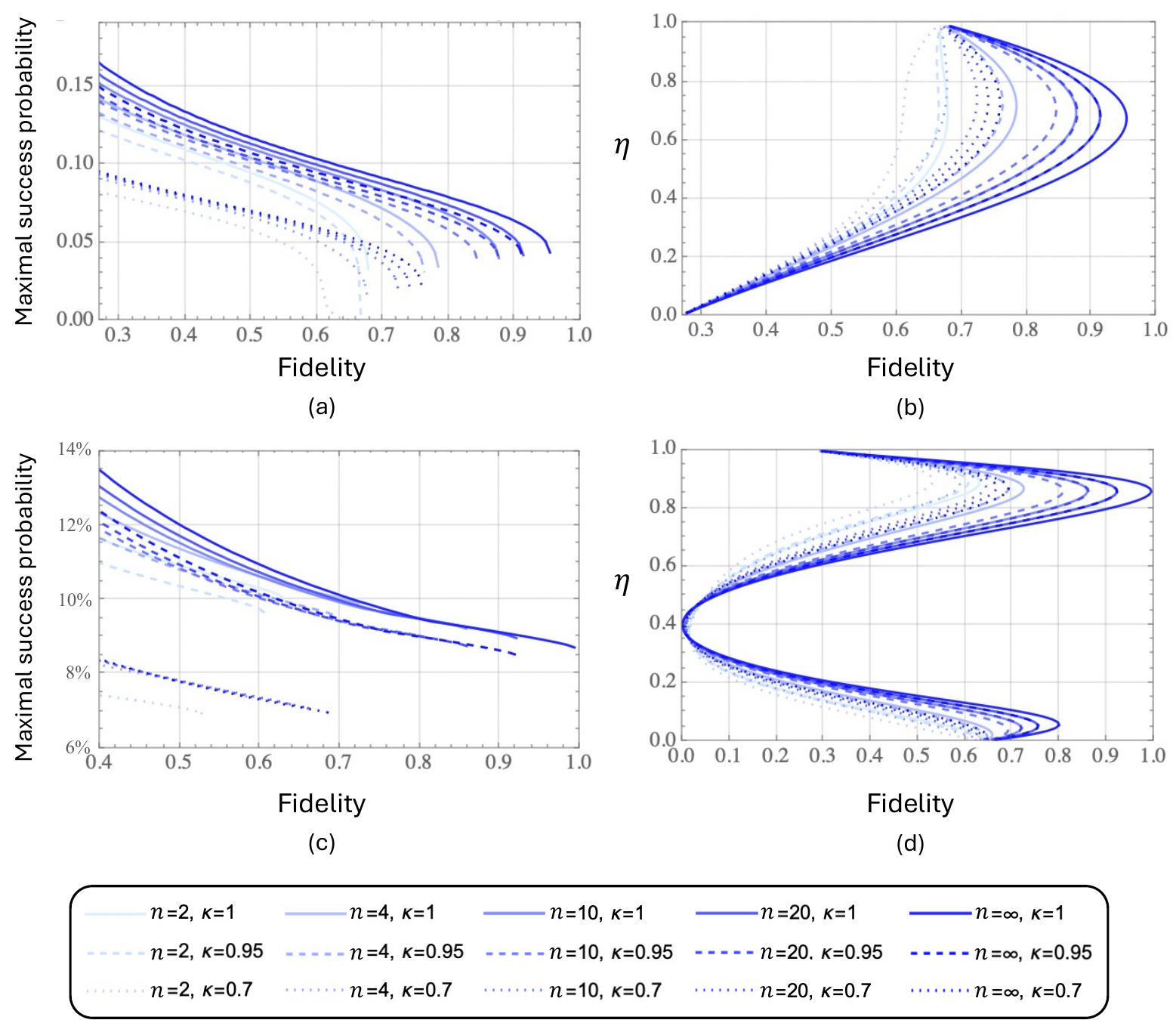}
\caption{Impact of detector number $n$ and efficiency $\kappa$ on cat-state breeding. 
\textbf{(a,c)} Maximum success probability versus fidelity, obtained by tuning the transmissivity $\eta$ (in the range $0 \le \eta \le 1$ for (a) and $0.4 \le \eta \le 1$ for (c)). \textbf{(b,d)} transmissivity values $\eta$ corresponding to a given  fidelity. Panels (a) and (b) present results for photon subtraction, whereas panels (c) and (d) illustrate generalized subtraction. We use a smooth color gradient from light blue to blue to represent increasing numbers of ON–OFF detectors, with $n=2$ shown in light blue and $n=\infty$ in blue. Solid, dashed, and dotted lines indicate detector efficiencies $\kappa = 1$, $0.95$, and $0.7$, respectively. All data are produced by  setting input states with 7 dB squeezing and an MPNR detector registering $k=2$ clicks.  
}
\label{fig:fid_prob_cat_breeding_k=2}
\end{figure*}



\subsection{Application to cat state generation}

Next, we consider the performance of MPNR detectors in the breeding process of Schrödinger cat states \cite{dakna1997generating,wenger2004pulsed,ourjoumtsev2006generating,lund2004conditional,takase2021generation,endo2023non,takase2023gottesman,takase2024generation} $|\rm cat\>_{\alpha}^\pm $, a protocol closely related to the detection of single-mode squeezed vacuum considered in the last subsection. Cat states play an important error-correcting role in quantum computing and communication. In particular, cat states can be used to further breed GKP states, as recently demonstrated experimentally~\cite{konno2024logical}. Here we focus on the even cat $|\rm cat\>_{\alpha}^+$, while the odd cat can be analyzed in a similar way.

As depicted in Fig.~\ref{fig:scheme_cat} (a), we first consider the generation of cat states via photon subtraction in a single-mode squeezed vacuum, with MPNR detection. The squeezed vacuum goes through a beamsplitter, diverting $1-\eta$ portion towards the MPNR detector while leaving $\eta$ portion leftover to approximate the cat state conditioned on the MPNR result. 
The following theorem gives a comparison between the fidelity achieved by using an ideal PNR detector and MPNR detectors. 

\begin{theorem} \label{lem:output state breeding cat}
In the process of generating cat states using a squeezed vacuum with a squeezing parameter $r$, a beamsplitter with transmissivity $\eta$, and an $n$-plexed PNR detector with a condition that the observed photon number is $k$, the resulting state can be expressed as follows: 
\begin{align}
\rho_{\widetilde{\rm cat}}
\propto &\frac{n!}{(n-k)!k!}\cdot \int \frac{\d^2 \alpha\d^2 \beta }{ \pi^2} \frac 1 {\cosh r}e^{-\tanh r \frac{\alpha^{*2}}{2 }}
e^{-\frac{|\alpha|^2}{2}} \nonumber \\
&\times e^{-\tanh r \frac{\beta^{*2}}{2 }}
e^{-\frac{|\beta|^2}{2}} \exp\left[-\frac{(1-\eta)(|\alpha|^2+|\beta|^2)}{2}\right] \nonumber \\
&\times \left\{\exp\left[\frac{(1-\eta)\alpha\beta^*}{n}\right]-1\right\}^k\left|\underline{\sqrt \eta \, i\alpha}\right\>\left\<\underline{\sqrt \eta \, i\beta}\right|.
\end{align}
Moreover, fidelity $F_{\rm mpnr}=\<{\rm cat}|_{\sqrt k}^+\rho_{\widetilde{\rm cat}}|{\rm cat}\>^+_{\sqrt k}$ is close to the fidelity, $F_{\rm pnr}= |\<{\rm cat}|^+_{\sqrt k}|\widetilde{\rm cat}\>_{\sqrt k}|^2$, achieved by an ideal PNR detector,
\begin{align}
F_{\rm mpnr}&=F_{\rm pnr}- \mathcal O\left(\frac{1-\eta}{n}\right) . 
\end{align}
\end{theorem}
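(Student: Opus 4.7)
The plan is to work throughout in the coherent-state basis, which diagonalizes the beamsplitter and yields compact matrix elements for both the squeezed-vacuum density operator and the MPNR POVM. First I would insert the resolution of identity $I=\int d^{2}\alpha\,|\underline{\alpha}\>\<\underline{\alpha}|/\pi$ twice to write $|r\>_{\rm sq}\<r|_{\rm sq}=\int(d^{2}\alpha\,d^{2}\beta/\pi^{2})\,\<\underline{\alpha}|r\>_{\rm sq}\<r|_{\rm sq}|\underline{\beta}\>\,|\underline{\alpha}\>\<\underline{\beta}|$ and substitute the known Gaussian wavefunction $\<\underline{\alpha}|r\>_{\rm sq}=(\cosh r)^{-1/2}\exp[-|\alpha|^{2}/2-\tanh r\,(\alpha^{*})^{2}/2]$, producing the Gaussian prefactors in the theorem. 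Because the beamsplitter maps product coherent states to product coherent states, $U_{\rm BS}|\underline{\alpha}\>_{a}|\underline{0}\>_{b}=|\underline{i\sqrt{\eta}\,\alpha}\>_{a}|\underline{\sqrt{1-\eta}\,\alpha}\>_{b}$, this step immediately accounts for the $|\underline{i\sqrt{\eta}\,\alpha}\>\<\underline{i\sqrt{\eta}\,\beta}|$ factor in the statement.

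The remaining ingredient is the scalar $\<\underline{\sqrt{1-\eta}\,\beta}|E_{k}^{(n)}|\underline{\sqrt{1-\eta}\,\alpha}\>$, which I would compute by propagating the coherent states through the internal balanced splitter, $|\underline{\gamma}\>\mapsto\bigotimes_{j=1}^{n}|\underline{\gamma/\sqrt{n}}\>$, and summing over the $\binom{n}{k}$ click patterns using $\<\underline{\delta}|(I-|\underline{0}\>\<\underline{0}|)|\underline{\gamma}\>=e^{-(|\delta|^{2}+|\gamma|^{2})/2}(e^{\delta^{*}\gamma}-1)$ together with $\<\underline{\delta}|\underline{0}\>\<\underline{0}|\underline{\gamma}\>=e^{-(|\delta|^{2}+|\gamma|^{2})/2}$. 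The $n$ identical factors per mode collapse into $\binom{n}{k}e^{-(1-\eta)(|\alpha|^{2}+|\beta|^{2})/2}\{e^{(1-\eta)\alpha\beta^{*}/n}-1\}^{k}$, exactly reproducing the claimed integral representation of $\rho_{\widetilde{\rm cat}}$.

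For the fidelity bound I would compare $\rho_{\widetilde{\rm cat}}$ with its $n\to\infty$ limit, which equals the unnormalized output of an ideal $k$-photon subtraction, using the pointwise Taylor expansion $\binom{n}{k}(e^{x/n}-1)^{k}=x^{k}/k!+\mathcal{O}(|x|^{k}/n)+\mathcal{O}(|x|^{k+1}/n)$ with $x=(1-\eta)\alpha\beta^{*}$. Substituted into the integral, the leading term reproduces $\rho^{(0)}$, whose normalized version equals $|\widetilde{\rm cat}\>_{\sqrt k}\<\widetilde{\rm cat}|_{\sqrt k}$ and gives $F_{\rm pnr}$. Applying the expansion separately to the numerator $\<{\rm cat}|^{+}_{\sqrt k}\rho_{\widetilde{\rm cat}}|{\rm cat}\>^{+}_{\sqrt k}$ and the trace $\mathrm{Tr}[\rho_{\widetilde{\rm cat}}]$, and then expanding the ratio, gives $F_{\rm mpnr}=F_{\rm pnr}+\mathcal{O}((1-\eta)/n)$.

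The main obstacle is promoting the pointwise Taylor bound to a uniform $\mathcal{O}((1-\eta)/n)$ bound after the $\alpha,\beta$ integration. Two ingredients are needed: (i) the polynomial corrections in $\alpha,\beta$ generated by the expansion have finite Gaussian integrals, which follows because the combination $e^{-(2-\eta)|\alpha|^{2}/2-\tanh r\,(\alpha^{*})^{2}/2}$ (and likewise in $\beta$) is Gaussian-integrable whenever $(2-\eta)>|\tanh r|$, automatic for physical squeezing; and (ii) the unnormalized trace $\mathrm{Tr}[\rho^{(0)}]$, i.e.\ the ideal $k$-subtraction success probability, must stay bounded away from zero, so that the perturbation of the unnormalized state translates into a fidelity perturbation of the same order rather than being inflated by a vanishing denominator. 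Both can be verified by explicit estimates on the two Gaussian integrals that appear.
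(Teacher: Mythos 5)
Your proposal is correct and follows essentially the same route as the paper's Appendix on cat-state breeding: a coherent-state (generalized P-function) representation of the squeezed vacuum, the beamsplitter action on coherent states, per-mode ON-OFF matrix elements $e^{-(|\delta|^2+|\gamma|^2)/2}(e^{\delta^*\gamma}-1)$ combined with the permutation factor $\binom{n}{k}$, and a $1/n$ Taylor expansion of $\{e^{(1-\eta)\alpha\beta^*/n}-1\}^k$ against the ideal $k$-photon-subtraction output. Your closing remarks on uniform integrability of the Gaussian tails and on the success probability staying bounded away from zero make the $\mathcal O((1-\eta)/n)$ claim slightly more rigorous than the paper's treatment, but do not constitute a different method.
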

A detailed proof of Theorem~\ref{lem:output state breeding cat} is presented in Appendix \ref{app:cat state breeding}. 

Note that fidelity alone cannot fully characterize probabilistic cat-state breeding because it trades off against the success probability. We therefore numerically simulate the protocol for the case with an input state with 7dB squeezing (see \cite{kashiwazaki2021fabrication,chen2022ultra}) and an MPNR detertor registering $k=2$ clicks  (see Fig.~\ref{fig:fid_prob_cat_breeding_k=2}). Here, we use a smooth color gradient from light blue to blue to show an improved maximal success probability at  fixed fidelity as more ON–OFF detectors are multiplexed. 
\QZ{Here, the success probability is obtained by computing the trace of the output state (see Eqs. (\ref{eqc15ppp}), (\ref{eqc23ppp}), and (\ref{eqc40ggg}) of Appendix \ref{app:cat state breeding}).}
Comparing solid, dashed, and dotted curves of the same color, which correspond to \QZ{ ON-OFF detector efficiencies} $\kappa = 1$, $\kappa = 0.95$, and $\kappa = 0.7$, respectively, the success probability is reduced accordingly.  
Notably, assuming an array of 20 detectors, we can achieve  fidelity $\sim0.88$ with success probability $\sim 3.8\%$, representing orders of magnitude higher success compared to previous works~\cite{endo2023non}. 


While the success probability is informative, in a real system, the state generation rate in unit of Hertz (Hz) is what matters the most. Next, we show that the parameters of the on-chip MPNR detector in Ref. \cite{cheng2023100} allows the generation of decent-quality cat states at megahertz rates. In comparison, earlier experiment based on a superconducting transition-edge sensor (TES) achieved only a 200Hz generation rate, constrained by a 20MHz detector count rate and an efficiency of $\kappa = 0.7$ \cite{endo2023non}. Although the TES resolved more than ten pixels, its modest efficiency forced the operation at high beamsplitter reflectivity to reach maximal fidelity (e.g., $\eta \approx 0.924$ for two-photon subtraction, $k=2$), which in turn limited the success probability in experiment to roughly $\sim0.004\%$ and kept the generation rate low. According to our simulations, replacing the TES with an ideal PNR detector while keeping efficiency, resolution, reflectivity, and effective squeezing level after loss 2.9dB as in Ref. \cite{endo2023non} yields no benefit. In this case, the success probability in simulation is approximately 0.02$\%$ and the fidelity is approximately 0.81 for two-photon subtraction.

In contrast, the spatial–temporal multiplexed PNR detector in Ref. \cite{cheng2023100} achieves high efficiency ($\kappa = 0.95$) with negligible dark counts. These advantages greatly increase the probability of success. Meanwhile, a ten-pixel configuration in~\cite{cheng2023100} can support a 100 MHz counting rate (though the original demonstration utilized 100 pixels with potentially lower counting rates), \QZ{computed by dividing the per-pixel rate (1 GHz~\cite{cheng2023100}) by the number of pixels.  } For two-photon subtraction with 7dB input squeezing, the highest fidelity, 0.84, appears at reflectivity $\eta\approx$0.7 and gives a 3.6$\%$ success probability. With 2.9dB squeezing, the fidelity peak moves to $\eta\approx$1; choosing $\eta$= 0.924 then yields 0.82 fidelity and a 0.03$\%$ success probability. Consequently, multiplying the 100MHz count rate by the corresponding success probabilities yields a megahertz-scale generation rate for 7dB squeezing and roughly 30kHz for 2.9dB squeezing.

We further examine generalized photon subtraction, where both beamsplitter inputs for cat-state breeding are squeezed vacuum states (see Fig. \ref{fig:scheme_cat}(b)). With $7\mathrm{dB}$ squeezing, a 20-pixel MPNR detector ($\kappa = 95\%$) registering $k = 2$ clicks achieves a fidelity $0.86$ with an $8.7\%$ success probability at beamsplitter reflectivity $\eta = 0.856$ (Fig.~\ref{fig:fid_prob_cat_breeding_k=2}).  
For $2.9\mathrm{dB}$ squeezing, the same setup reaches the same fidelity at $\eta = 0.803$, but with a reduced success probability of $1.7\%$. Hence, in both cases a megahertz-level generation rate is attainable with the MPNR detector of Ref.~\cite{cheng2023100}.

Finally, we extend our study to the four-photon subtraction protocol to compare with the results of~\cite{endo2025high}. Here, with 6.5dB input squeezing and beam-splitter reflectivity $\eta = 0.81$ fixed, we examine two detector settings. (i) An ideal PNR detector with efficiency $\kappa=0.95$  and unlimited pixels gives fidelity $0.88$, success probability $0.07\%$, and a generation rate of $3.5\text{kHz}$ at a 5MHz detector count rate. Limiting the MPNR to ten pixels lowers the fidelity to $0.73$ and the success probability to $0.04\%$, yielding a generation rate $2 \text{kHz}$. (ii) A TES with  efficiency $\kappa=0.4$ and unlimited pixels gives fidelity $0.58$, success probability $0.004\%$, and a generation rate $200\text{Hz}$. With ten pixels, the fidelity is $0.55$, the success probability $0.002\%$, and the generation rate $100\text{Hz}$. Even a ten-pixel MPNR already exceeds the rate $1.5\text{Hz}$ reported in Ref.~\cite{endo2025high}. A high-efficiency MPNR detector thus potentially boosts the rate by  orders of magnitude, showing that detector efficiency and pixel count now set the pace for high-rate, high-fidelity cat-state generation.

\QZ{In our simulations the parameter $\kappa$ denotes the overall per-pixel detection efficiency, including the intrinsic detector response and all coupling and filtering losses, and we consider $\kappa\in\{1,0.95,0.7,0.4\}$ to cover present and near-future devices. The value $\kappa=0.4$ used for the TES comparison is chosen as a conservative system efficiency consistent with recent telecom-band non-Gaussian state generation and four-photon subtraction experiments with TES-based PNR detectors \cite{endo2023non,endo2025high,sonoyama2023non}, while $\kappa\simeq0.7$ reflects the $\sim70\%$ internal efficiency of Ti--Au TESs at $1.5\,\mu\mathrm{m}$ and similar effective efficiencies in time-multiplexed or segmented PNR detectors \cite{endo2023non,fukuda2011titanium}. The near-unity value $\kappa=0.95$ is motivated by record system detection efficiencies of $95$–$98\%$ demonstrated in superconducting TES and SNSPD systems \cite{lita2008counting,fukuda2011titanium,dauler2014review}, and $\kappa=1$ is included only as an ideal reference without detection loss.
}

\section{ Conclusions and discussion}

In this work, we established a theoretical framework for photon counting with MPNR detectors. We demonstrate that, for estimating higher-order moments of the photon number, the error of MPNR scales inversely with the number of ON-OFF detectors $n$ employed in the MPNR detector, $\sim \mathcal O(1/n)$. Beyond the general analysis, we examine specific applications, including photon counting for squeezed vacuum states and the cat-state breeding process using MPNR detectors. We numerically illustrate the fidelity and success probability of the breeding process assisted by MPNR detectors.

Similar to dark count noise, we expect detector dead time to also introduce an error that scales as $1/n$, as long as the detectors are independent. The detector correlation is another important effect that may affect MPNR detectors.
In Appendix~\ref{app:correlated detectors}, 
we analyze the scenario in which each ON-OFF detector in the MPNR detector configuration exhibits correlations with an adjacent detector. Specifically, if one detector registers an ON event while its neighbor registers OFF, the correlated detection mechanism can still result in both detectors indicating ON with a probability $p$. Here we show that in this scenario, the estimation of the number of photons using an MPNR detector exhibits bias, which vanishes as the correlation probability $p$ approaches zero.\\

\QZ{
We also note that Refs.~\cite{sperling2012true} and~\cite{fiuravsek2025fundamental} work at the POVM level for arrays of ON–OFF detectors. Reference~\cite{sperling2012true} derives the exact click statistics of a given multiplexed array, including finite efficiency and noise, and uses this to analyze nonclassical features of light. Reference~\cite{fiuravsek2025fundamental} then exploits this POVM to obtain fundamental bounds on photon number probabilities, parity, and the mean photon number from tomographically incomplete click data for a fixed number of channels. In contrast, our framework is moment based and targets multiplexed photon-number-resolving detectors. For arbitrary finite-energy input states we show that all photon-number moments can be written in closed form in terms of Stirling numbers of the second kind, and we derive explicit $\mathcal O(1/n)$ convergence guarantees in the number 
$n$ of detector elements for any finite order moment. These analytic results are further combined with a concrete numerical scheme that simulates lossy cat-state breeding with multiplexed PNR detectors, which is not addressed in Refs.~\cite{sperling2012true,fiuravsek2025fundamental}.\\
}

\QZ{Beyond uniform dark-count noise and efficiency, detector dead time, and correlated-click errors, our framework still assumes an ideal linear optical network with almostly identical pixels and a single effective spatio-temporal mode. Imperfections such as mode mismatch, wavelength dependent splitting ratios, and readout bandwidth limits are not captured, and incorporating these device level constraints into the MPNR model is a natural direction for future work.
}

\section*{Acknowledgements}
 This work is supported by Defense Advanced Research Projects Agency (DARPA) HR00112490453. 

\section*{Data availability}
The data that support the findings of this article are openly available~\cite{data}.







\bibliographystyle{apsrev4-1}


%

\begin{widetext}

\appendix 

\section{Proof of Lemma \ref{theo:p_func_higher_moment} of the main text}\label{app:proof of lemma 2}

\subsection{Observables from ON-OFF detectors}

Without loss of generality, the MPNR detector can be defined as follows:
\begin{definition}[MPNR]An $n$-plexed PNR detector consists of an $n$-port beamsplitter that generates interference between the input state and vacuum ancillas. Each output port is connected to an ON-OFF detector described by the positive operator-valued measure  (POVM) $\{|0\>\<0|,I-|0\>\<0|\}$. 
\end{definition}

For clarity, we first focus on measuring photon distributions after the multiport beamsplitter distributing input states. Then, the following definition can be given:
\begin{definition}[Measurement after beamsplitting]\label{lem:povm:mPNR}
The POVM associated with the ON-OFF detectors at the output ports of the multiport beamsplitter in a MPNR detector is:
\begin{align}
&\left\{P_{{\rm ON-OFF},k},(k\le n)\right\}\nonumber \\
P_{{\rm ON-OFF},k}&=\bigoplus_{\sigma_g\in \map S_{n,k}}U_{\sigma_g} \left(\widetilde I^{\otimes k}\otimes |0\>\<0|^{\otimes n-k} \right) U_{\sigma_g}^\dag,\label{eq11}
\end{align}
where $U_{\sigma_g}$ denotes an $n$-mode permutation  operation, $\map S_{n,k}$ represents the permutation group for two sets of identical inputs, with $(n-k)$ and $k$ elements respectively, $\widetilde I=I-|0\>\<0|$ denotes a projector of a single mode that excludes the vacuum state $|0\>\<0|$. 
\end{definition}

Furthermore, the measurement outcomes obtained from the POVM defined in Definition  \ref{lem:povm:mPNR} can be used to estimate higher-order moments of the photon distribution through classical data processing. This relationship is formalized in the following lemma:
\begin{lemma}[Higher-order moments]\label{pro:higher_moment_observable}
Given a MPNR detector, the following observable can be realized: 
\begin{align}\label{eq8a}
\widehat N^h_{\rm ON-OFF}&=\left(\sum_{k=0}^\infty k\, P_{{\rm ON-OFF},k}\right)^h
\equiv \left(\sum_{j=1}^n \widetilde I_{j}\bigotimes_{\substack{k=1\\ k\neq j}}^n I_{k}\right)^h \nonumber \\
&(h= 1,2,\cdots ),
\end{align}
where $\widetilde I_{j}$ is a projector defined by $\widetilde I_{j}=I_{j}-|0\>\<0|_{j}:=\sum_{k_j=1}^\infty |k_j\>\<k_j|_{j}$, with $I_{j}$ referring to the identity operator of the $j$-th output port of the beamsplitter network, and $|0\>\<0|_{j}$ denoting the vacuum state of the $j$-th output mode. Intuitively, the definition of $\widehat N^h_{\rm ON-OFF}$ resembles the total photon number operator, but with the operator $\hat a_j^\dag \hat a_j$	 replaced by an alternative projection operator $\widetilde I_j$.  
\end{lemma}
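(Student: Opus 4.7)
The plan is to decompose both sides of Eq.~\eqref{eq8a} into a common family of mutually orthogonal projectors indexed by bit-strings $\vec c \in \{0,1\}^n$ that record which detectors click. Since the single-mode projectors $|0\rangle\langle 0|_j$ and $\widetilde I_j = I_j - |0\rangle\langle 0|_j$ are orthogonal and sum to $I_j$, and operators on distinct modes commute, the family
\begin{equation}
Q_{\vec c} \;=\; \bigotimes_{j:\, c_j = 1} \widetilde I_j \;\otimes\; \bigotimes_{j:\, c_j = 0} |0\rangle\langle 0|_j, \qquad \vec c \in \{0,1\}^n,
\end{equation}
is a complete set of orthogonal projectors on the $n$-mode Fock space. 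I would use $\{Q_{\vec c}\}$ as the common refinement that simultaneously diagonalizes both sides of the claim.

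First, I would match $P_{{\rm ON-OFF},k}$ to this refinement. The template operator $\widetilde I^{\otimes k}\otimes |0\rangle\langle 0|^{\otimes n-k}$ appearing in Eq.~\eqref{eq11} is precisely $Q_{\vec c^\star}$ with $\vec c^\star = (1,\ldots,1,0,\ldots,0)$, and the permutation sum over $\map S_{n,k}$ cycles this template through every bit-string of Hamming weight $k$ exactly once. This gives $P_{{\rm ON-OFF},k} = \sum_{\vec c:\, |\vec c| = k} Q_{\vec c}$.

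Next, I would expand the single-detector indicator by substituting $I_k = \widetilde I_k + |0\rangle\langle 0|_k$ into each tensor factor $k \neq j$ and regrouping the $2^{n-1}$ resulting terms, obtaining $\widetilde I_j \bigotimes_{k \neq j} I_k = \sum_{\vec c:\, c_j = 1} Q_{\vec c}$. Summing over $j$ and interchanging summation orders yields
\begin{equation}
\sum_{j=1}^n \widetilde I_j \bigotimes_{k \neq j} I_k \;=\; \sum_{\vec c} |\vec c|\, Q_{\vec c} \;=\; \sum_{k=0}^n k \sum_{\vec c:\, |\vec c| = k} Q_{\vec c} \;=\; \sum_{k=0}^n k\, P_{{\rm ON-OFF},k},
\end{equation}
which is exactly the $h = 1$ case of Eq.~\eqref{eq8a}.

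Finally, because $\{Q_{\vec c}\}$ are orthogonal projectors summing to the identity, both operators in the just-derived identity are simultaneously diagonal with eigenvalue $|\vec c|$ on the block labeled by $\vec c$. Raising either side to the $h$-th power simply replaces this eigenvalue by $|\vec c|^h$ on the same block, so the identity lifts to arbitrary integer $h \geq 1$ without additional work. The only step requiring genuine care is the combinatorial bookkeeping in matching the permutation sum in Eq.~\eqref{eq11} to the sum over weight-$k$ bit-strings: one must verify via orbit-stabilizer reasoning that the action of $U_{\sigma_g}$ on the template produces each $\vec c$ of weight $k$ exactly once, with neither over- nor under-counting.
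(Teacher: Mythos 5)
Your proposal is correct and follows essentially the same route as the paper's proof: the paper also expands $\bigotimes_{k\neq j} I_k$ into products of $\widetilde I$ and $|0\rangle\langle 0|$ and observes that each click-pattern projector of weight $k$ is collected exactly $k$ times, which is precisely your $\sum_{\vec c}|\vec c|\,Q_{\vec c}$ bookkeeping. Your version is simply a more explicit write-up (naming the common refinement $\{Q_{\vec c}\}$ and noting that the $h$-th power follows from simultaneous diagonalization), which fills in the combinatorial details the paper leaves as "quick to prove."
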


\begin{proof}
The operator $\bigotimes_{\substack{k=1\\ k\neq j}}^n I_{k}$ can be expressed as a summation of terms involving products of $\widetilde I_k$ and $|0\>\<0|_k$. Explicitly, it includes terms such as $\widetilde I_1\otimes \widetilde I_2\cdots \widetilde I_{n}$, $|0\>\<0|_1\otimes \widetilde I_2\cdots \widetilde I_{n}$, and $|0\>\<0|_1\otimes |0\>\<0|_2\cdots |0\>\<0|_{n}$. Then, it is quick to prove that it can be rewritten as the summation of $k$ identical terms equal to $P_{{\rm ON-OFF},k}$. 
\end{proof}\\[-0.5em]

Using Definition \ref{lem:povm:mPNR} and Lemma \ref{pro:higher_moment_observable}, the distribution of total photon number can be determined. Then, one can combine them with the balanced beamsplitter operation $ U_{\rm BS}$ acted before ON-OFF detectors: 
\begin{align}
P_{{\rm ON-OFF},k}'&= U_{\rm BS}^\dag P_{{\rm ON-OFF},k}U_{\rm BS}\\
\widehat N^{'h}_{\rm ON-OFF}&= U_{\rm BS}^\dag\widehat N^h_{\rm ON-OFF}U_{\rm BS}\label{eqa4aa}
\end{align}
where $U_{\rm BS}$ is a beamsplitter operation that satisfies the relation $U_{\rm BS}\hat a_k U_{\rm BS}^\dag= \frac 1 {\sqrt n}\sum_{j=1}^n \,\omega_{jk}^{1/2}\hat a_j,\ k=1\cdots,n$ with the weights $\{\omega_{jk}\}$ fulfilling the normalization condition $\sum_{j=1}^n\left(\omega_{jk}\omega_{k\ell}\right)^{1/2}=\delta_{j\ell}$. 
Without losing the generality, all ancillas are assumed to be vacuum states. Therefore, this operation allows for a complete characterization of the measurement process in a MPNR detector. 

In the following subsections, we will present an application to the detection of photon distribution and its higher moments.

\subsection{Effective realization of PNR with ON-OFF detectors}\label{app:effective implementationo f PNR with ON-OFF}


\subsubsection{Estimation of mixed states}

Let's look at the performance of MPNR detectors in estimation photon moments. On the side of probe state, we can consider an arbitrary single-mode quantum state expressed with P-function: 
\begin{align}\label{p_function}
\rho&=\int \frac{\d^2 \alpha}{\pi} P(\alpha) |\underline{\alpha}\>\<\underline{\alpha}|.  
\end{align}
where $|\underline{\alpha}\>$ with $\alpha\in \mathbb C$ denotes the coherent state. The beamsplitter network generating interference between the input and vacuum states will transform the state $\rho$ into the following form: 
\begin{align}
\rho'=& \int \frac{\d^{2}  \alpha}{\pi} P(\alpha)\bigotimes_{j=1}^n \left|\underline{\sqrt{\eta_j'}{\alpha}}\right\>\left\<\underline{\sqrt{\eta_j'}{\alpha}}\right| ,
\end{align}
where $\{\eta_j''=\omega_{1j}\}$ denotes the reflectivity, with $\omega_{1j}$ being defined in Eq. (\ref{eqa4aa}). In addition, each ON-OFF detector has an efficiency $\kappa_j$. Then, the state becomes:   
\begin{align}\label{eqb6}
\rho''=& \int \frac{\d^{2}  \alpha}{\pi} P(\alpha)\bigotimes_{j=1}^n \left|\underline{\sqrt{\eta_j'\kappa_j}{\alpha}}\right\>\left\<\underline{\sqrt{\eta_j'\kappa_j}{\alpha}}\right| .
\end{align}

On the side of measurements, the ON-OFF detectors can only lead to an observable $\widehat N^h_{\rm ON-OFF}$ \QZ{defined in Eq. (\ref{eq8a}),  }
\QZ{ which just replaces the energy operator $\{\hat a^\dag_j \hat a_j\}$ with an alternative projector $\widetilde I_{j}$.} 

\QZ{
Without losing the generality, let's consider the scenario with presumptions 
$\eta''_j=\eta_j'\kappa_j\simeq \eta''=\mathcal O(1/n)$ and $n\gg h=\mathcal O(1)$. Then, applying ON-OFF detectors to estimate the $h$-th moment of photon number ($h\le n$) will lead to the following result: 
\begin{align}
N^h_{\rm mpnr}=&\left\<\left(\sum_{i=1}^n \widetilde I_i \bigotimes_{\substack{j=1\\ j\neq i}}^n I_j \right)^h\right\>_{\rho''}\\
=&\int \frac{\d^2 \alpha}{\pi} P(\alpha) \sum_{m=1}^h 
\binom{n}{m}\sum_{k=0}^m (-1)^k \binom{m}{k}(m-k)^h\left\{\left[1-\exp\left(-\frac {|\alpha|^2} n \right)\right]^m \right.\nonumber \\
&\left. +\mathcal O\left(\max_j\left|n\eta_j''-1 \right|\cdot |\alpha|^{2}\right)\right\}\label{eqa9aaa} \\
=&\int \frac{\d^2 \alpha}{\pi} P(\alpha) \sum_{m=1}^h \binom{n}{m}\sum_{k=0}^m (-1)^k \binom{m}{k}(m-k)^h\left(\frac{|\alpha|^2}n\right)^m\left[1-\frac {m|\alpha|^2}{2n}+\mathcal O\left(\frac{m^2|\alpha|^4}{n^2}\right)\right.\nonumber\\
&\left.\ \ \ \ \ +\mathcal O\left(\max_j\left|\eta_j''-\frac 1 n \right|\right)\right] \\
=&\int \frac{\d^2 \alpha}{\pi} P(\alpha)\sum_{m=1}^h \frac{1}{m!}\left[1-\frac{m(m-1)}{2n}+\mathcal O\left(\frac{1}{n^2}\right)\right] \sum_{k=0}^m (-1)^k \binom{m}{k}(m-k)^h |\alpha|^{2m} \nonumber \\
&-\int \frac{\d^2 \alpha}{\pi} P(\alpha) \sum_{m=1}^h \binom{n}{m}\sum_{k=0}^m (-1)^k \binom{m}{k}(m-k)^h\left(\frac{|\alpha|^2}n\right)^m\left[\frac {m|\alpha|^2}{2n}+\mathcal O\left(\frac{|\alpha|^4}{n^2}\right)\right.\nonumber \\
&\ \ \ \ \ \left.+\mathcal O\left(\max_j\left|\eta_j''-\frac 1 n\right|\right)\right] \label{eqa11fff}\\
=&N^h - \mathcal O \left(\frac {h^2} n\right)+\mathcal O\left(\max_j\left|\eta_j''-\frac 1 n\right|\right)\label{eqa12ggg}\\
=&N^h +\mathcal O\left(\frac{1}{n}\right)\label{eqc13fff}
\end{align}
where Eq.~(\ref{eqa9aaa}) follows by expanding $\Bigg(\sum_{i=1}^n \widetilde I_i \bigotimes_{\substack{j=1\\ j\neq i}}^n I_j \Bigg)^{\!h}$ and applying each term to the coherent product state \(\bigotimes_{j=1}^n \left|\underline{\sqrt{\eta_j''}\alpha}\right\>\) introduced in Eq. (\ref{eqb6}) with an assumption $\eta''=\mathcal O(1/n)$. Here, the number of terms with exactly \(m\) occurrences of \(\widetilde I\) equals
\begin{align}
\binom{n}{m}\sum_{k=0}^m (-1)^k \binom{m}{k}(m-k)^h
\end{align}
which is obtained by first choosing $m$ detecting modes among the $n$ modes, giving $\binom{n}{m}$ options, and then counting the surjective maps from the $h$ positions to these $m$ modes, whose number is $m!\,S(h,m)=\sum_{k=0}^m (-1)^k \binom{m}{k}(m-k)^h$, where $S(h,m)$ denotes the Stirling number of the second kind. Then, each term with $m$ occurrences of $\widetilde I$ projects the coherent product state \(\bigotimes_{j=1}^n \bigl|\underline{\sqrt{\eta_j''}\alpha}\bigr\rangle\) into the same scalar factor
\begin{align}
&\Bigg[1-\exp\!\left(-\tfrac{|\alpha|^2}{n}-\mathcal O\left(\max_j\left|\eta_j''-\frac 1 n\right|\right)|\alpha|^2\right)\Bigg]^m
\\
=&\left[1-\exp\left(-\frac {|\alpha|^2}{n} \right)\right]^m 
+\mathcal O\left(\max_j\left|n\eta_j''-1\right|\cdot |\alpha|^{2}\right)
\end{align}
}

\QZ{
Further, Eq. (\ref{eqa11fff}) is obtained by the property: 
\begin{align}
\binom{n}{m}\frac 1 {n^m}&=\frac{n(n-1)(n-2)\cdots (n-m+1)}{m!n^m}\\
&=\frac{1}{m!}\left[1-\frac{m(m-1)}{2n}+\mathcal O\left(\frac{1}{n^2}\right)\right]
\end{align}
}

\QZ{
Eq. (\ref{eqa12ggg}) follows from the relation as follows: 
\begin{align}
N^h=&\left\<\left(\sum_{i=1}^n \hat a_i ^\dag \hat a_i \otimes_{j=1\atop j\neq i}^n I_j\right)^h\right\>_{\rho''}\\
=&\int \frac{\d^2 \alpha}{\pi} P(\alpha)\left\<\underline{\frac{\alpha}{\sqrt n}}\right|^{\otimes n}\left(\sum_{i=1}^n \hat a_i ^\dag \hat a_i \otimes_{j=1\atop j\neq i}^n I_j\right)^h\left|\underline{\frac{\alpha}{\sqrt n}}\right\>^{\otimes n} \\
=&\int \frac{\d^2 \alpha}{\pi} P(\alpha)\left\<\underline{\alpha}\right|\left\<\underline{0}\right|^{\otimes (n-1)}\left(\sum_{i=1}^n \hat a_i ^\dag \hat a_i \otimes_{j=1\atop j\neq i}^n I_j\right)^h\left|\underline{\alpha}\right\>\left|\underline{0}\right\>^{\otimes (n-1)} \label{eqa21iii}\\
=&\int \frac{\d^2 \alpha}{\pi} P(\alpha)\left\<\underline{\alpha}\right|\left( \hat a_1^\dag \hat a_1 \right)^h\left|\underline{\alpha}\right\>\\
=& \int \frac{\d^2 \alpha}{\pi} P(\alpha)\sum_{m=1}^h\frac 1 {m!}\sum_{k=0}^m (-1)^k \binom{m}{k}(m-k)^h |\alpha|^{2m}\label{eqa23ppp}
\end{align}
where Eq. (\ref{eqa21iii}) is derived using the fact that the total photon number operator $\sum_{i=1}^n \hat a_i ^\dag \hat a_i \otimes_{j=1\atop j\neq i}^n I_j$ commutes with the beamsplitter operation that concentrates all the photons into the first mode, Eq. (\ref{eqa23ppp}) follows from Eq. (1.37) of \cite{mansour2016commutation}.
}

By using Eq.  (\ref{eqc13fff}), one have the conclusion in Lemma \ref{theo:p_func_higher_moment} of the main text.


\subsubsection{Estimation of pure states}

Similar to Eq. (\ref{p_function}), an arbitrary pure state $|\psi\>$ can be decomposed as follows:  
\begin{align}
|\psi\>&=\int \frac{\d^2 \alpha}{\pi} |\underline{\alpha}\>\<\underline{\alpha}|\psi\>\\
&=\int \frac{\d^2 \alpha}{\pi} f(\alpha) |\alpha\>\label{p_function2}
\end{align}
where the first equality is derived from the super completeness of coherent states, $f(\alpha):=\<\underline{\alpha}|\psi\>$ is a generalized P-function that defines the state $|\psi\>$. A beamsplitter network interacting with vacuums will generate a state: 
\begin{align}\label{eqc3}
|\psi'\>&=\int \frac{\d^2 \alpha}{\pi} f(\alpha) \bigotimes_{j=1}^n \left|\underline{\sqrt{\eta_j''}\alpha}\right\>. 
\end{align}

Given an arbitrary pure state defined in Eq. (\ref{p_function2}), a  beamsplitter lead to a state $|\psi'\>=\int \frac{\d^2 \alpha}{\pi} f(\alpha)\left|\underline{\alpha/\sqrt n}\right\>^{\otimes n}$. Then, the measurement result can be explicitly written as follows: 
\QZ{
\begin{align}
N^h_{\rm mpnr}=&\int \frac{\d^2 \alpha\d^2 \alpha'}{\pi^2} f(\alpha)f(\alpha') \sum_{m=1}^h 
\binom{n}{m}\sum_{k=0}^m (-1)^k \binom{m}{k}(m-k)^h\nonumber \\&\times \left[\left\<\underline{\frac{\alpha}{\sqrt n}}\right|\left.\underline{\frac{\alpha'}{\sqrt n}}\right\>-\exp\left(-\frac{|\alpha|^2+|\alpha'|^2}{2n}\right)\right]^m \left\<\underline{\frac{\alpha}{\sqrt n}}\right|\left.\underline{\frac{\alpha'}{\sqrt n}}\right\>^{n-m} \label{eqb10},
\end{align}
where $\<\underline{\beta}|\underline{\alpha}\>=\exp\left[-\frac 1 2 \left(|\beta|^2+|\alpha|^2-2\beta^*\alpha\right)\right]$ denotes the overlap between two coherent states $|\alpha\>$ and $|\beta\>$. This expression will be used for numerical simulation in the next subsection. 
}

\QZ{
Note that the coherent state basis is over complete, Eq. (\ref{eqc13fff})  works for all pure states, i.e., $N^h_{\rm mpnr}=N^h +\mathcal O\left(\frac{1}{n}\right)$. 
}

\subsubsection{Numerical evaluation for typical states}

By choosing the p-function in Eq. (\ref{eqb6}) as a Dirac delta, one can immediately obtain the explicit photon number moment $N^h_{\rm ON-OFF}$ for coherent states. In Fig. \ref{fig:cohernt_state} of the main text, we examine the error $| N^2_{\rm mpnr}-N^2|$ versus the true value of $N^2$. It can be observed that the scaling of quantity $| N^2_{\rm mpnr}-N^2| /N^2$ converges to $\mathcal O(\frac 1 n)$ as the number of detectors increases. Similarly, we can apply Eq. (\ref{eqb10}) to a Schr\"odinger cat state $|\text{cat}\>\propto |\underline{\beta}\>+|\underline{-\beta}\>$ with a distribution $f(\alpha)=1/\sqrt{2+2e^{-2|\beta|^2}}\delta_{\alpha-\beta}+1/\sqrt{2+2e^{-2|\beta|^2}}\delta_{\alpha+\beta}$. An evaluation of the performance is also shown in Fig. \ref{fig:cohernt_state} of the main text.

\section{Measuring photon distribution of squeezed vacuum states through MPNR detectors}\label{app:calibration_squeezied state}

Without losing the generality, let's consider a balanced beamsplitter and a squeezed state $|r\>_{\rm sq}$ with a real squeezed parameter $z=r\in \mathbb R$ and $\phi=\pi$. Given by the super-completeness of coherent states $\int \frac{\d^2\alpha}{\pi}|\underline \alpha\>\<\underline \alpha|=I$, we have the following expression: 
\begin{align}
|r\>_{\rm sq}&=\int \frac{\d^2\alpha}{\pi}|\underline\alpha\>\<\underline\alpha|\cdot  \frac{1}{\sqrt{\cosh r}} \sum_{n=0}^\infty (-\tanh r)^n \frac{\sqrt{(2n)!}}{2^n n!}|2n\>\\
&= \int \frac{\d^2\alpha}{\pi} \frac{1}{\sqrt{\cosh r}} \sum_{n=0}^\infty (-\tanh r)^n \frac{\alpha^{*2n}}{2^n n!}
e^{-\frac{|\alpha|^2}{2}} |\underline \alpha\>\\
&= \int \frac{\d^2\alpha}{\pi} \frac{1}{\sqrt{\cosh r}} e^{-\tanh r \frac{\alpha^{*2}}{2 }}
e^{-\frac{|\alpha|^2}{2}} |\underline\alpha\>\label{eqc8}
\end{align}

Then, consider an $n$-port MPNR detector with a balanced beamsplitter. Using Eq. (\ref{eqc8}), we have the state to be measured: 
\begin{align}
|r\>'_{\rm sp}&= \int \frac{\d^2\alpha}{\pi} \frac{1}{\sqrt{\cosh r}} e^{-\tanh r \frac{\alpha^{*2}}{2 }}
e^{-\frac{|\alpha|^2}{2}} \left|\underline{\frac{\alpha}{\sqrt n}}\right\>^{\otimes n}\\
&= \sum_{\ell=0}^\infty \sum_{j_1+j_2+\cdots=2\ell\atop j_1,j_2\cdots=0,\cdots,2\ell}  \int  \d \xi\, \frac{2\xi}{\sqrt{\cosh r}} e^{-\xi^2}\frac{(-\xi^2\tanh r)^\ell  }{2^\ell\ell!} \frac{\xi^{2\ell}}{n^\ell\sqrt{j_1!j_2!,\cdots}}|j_1\>|j_2\>\cdots |j_n\>\\
&= \sum_{\ell=0}^\infty \sum_{j_1+j_2+\cdots=2\ell\atop j_1,j_2\cdots=0,\cdots,2\ell}  \frac{1}{\sqrt{\cosh r}} \frac{(-\tanh r)^\ell  }{2^\ell\ell!} \frac{(2\ell)!}{n^\ell\sqrt{j_1!j_2!,\cdots}} |j_1\>|j_2\>\cdots |j_n\>\\
&= \sum_{\ell=0}^\infty \frac{1}{\sqrt{\cosh r}} \frac{(-\tanh r)^\ell \sqrt{(2\ell)!} }{2^\ell\ell!} |\Phi_\ell\>
\end{align}
where $\alpha=\xi e^{i\phi}$ refer to the reparametrization of displacement, $|\Phi_\ell\>$ is defined as follows: 
\begin{align}
|\Phi_\ell\>&=  \sum_{j_1+j_2+\cdots=2\ell\atop j_1,j_2\cdots=0,\cdots,2\ell}  \frac{\sqrt{(2\ell)!}}{n^{\ell}\sqrt{j_1!j_2!,\cdots}} |j_1\>|j_2\>\cdots |j_n\>
\end{align}

Then, the observed photon number distribution of a squeezed vacuum state $|z\>_{\rm sq}$ is: 
\begin{align}
\widetilde p_{{\rm sq},k}&=  \<r|'_{\rm sq}P_{{\rm ON-OFF},k}|r\>'_{\rm sq}\\
&= \<r|'_{\rm sq}\bigoplus_{\sigma_g\in \map S_{n,k}}U_{\sigma_g} \left(\widetilde I^{\otimes k}\otimes |0\>\<0|^{\otimes n-k} \right) U_{\sigma_g}^\dag |r\>'_{\rm sq} \\
&= \frac{n!}{k!(n-k)!} \<r|'\sum_{\ell=0}^\infty \sum_{j_{1}+\cdots+j_k=2\ell\atop j_{1},\cdots,j_k=1,\cdots,2\ell}  \frac{1}{\sqrt{\cosh r}} \frac{(-\tanh r)^\ell  }{2^\ell\ell!} \frac{(2\ell)!}{n^\ell\sqrt{j_1!j_2!,\cdots}} |j_{1}\>\cdots  |j_k\>|0\>^{\otimes n-k}\\
&= \frac{n!}{k!(n-k)!} \sum_{\ell=0}^\infty \frac{1}{\cosh r} \frac{(\tanh r)^{2\ell}  }{2^{2\ell}(\ell!)^2} \frac{[(2\ell)!]^2}{n^{2\ell}} \sum_{j_{1}+\cdots+j_k=2\ell\atop j_{1},\cdots,j_k=1,\cdots,2\ell}  \frac{1}{j_{1}!\cdots j_k!} 
\end{align}
where $U_{\sigma_g}$ denotes an $n$-mode permutation operation, $\map S_{n,k}$ represents the permutation group for two sets of identical inputs, with $(n-k)$ and $k$ elements respectively, $\widetilde I=I-|0\>\<0|$ denotes a projector of a single mode that excludes the vacuum state $|0\>\<0|$. 

\QZ{
Given the relation as follows: 
\begin{align}
\left(e^x-1\right)^k&= \left(\sum_{j=1}^\infty \frac{x^j}{j!}\right)^k\\
&= \sum_{n'\ge k}\left(\sum_{j_1+j_2+\cdots +j_k=n'\atop j_1,\cdots,j_k= 1,\cdots,n'}\frac 1 {j_1! \cdots j_k!}\right) x^{n'}  \label{eqb14ppp}\\
&\equiv k! \sum_{n'\ge k } \frac{x^{n'} }{n'!} \sum_{i=0}^k (-1)^i\left(\begin{matrix}
k\\i
\end{matrix}\right)(k-i)^{n'}\label{eqb15ppp}
\end{align}
where the last equation follows from Eq. (6.3) of \cite{boyadzhiev2012close}. Comparing the coefficient associated with the $x^{2\ell}$ term, we have :
\begin{align}
\sum_{j_1+j_2+\cdots +j_k=2\ell\atop j_1,\cdots,j_k= 1,\cdots,2\ell}\frac 1 {j_1! \cdots j_k!}&=  \frac{k!\, }{(2\ell)!} \sum_{i=0}^k (-1)^i\left(\begin{matrix}
k\\i
\end{matrix}\right)(k-i)^{2\ell}\\
&= \frac{k!\, }{(2\ell)!} S(2\ell,k)
\end{align}
where $S(h,k)$ denotes the Stirling numbers of the second kind. Thus, the probability $\widetilde p_{{\rm sq},k}$ can be written in an alternative way: 
\begin{align}
\widetilde p_{{\rm sq},k}
&= \frac{n!}{(n-k)!} \sum_{\ell=0}^\infty \frac{1}{\cosh r} \frac{(\tanh r)^{2\ell}  }{2^{2\ell}(\ell!)^2} \frac{(2\ell)!}{n^{2\ell}} S(2\ell,k).
\end{align}
}

On the other hand, the true value of photon number distribution is: 
\begin{align}
p_{{\rm sq},k}&= \begin{cases}
\frac{1}{\cosh r} \frac{(\tanh r)^{2k} (2k)! }{2^{2k}(k!)^2} & k \text{\ is even}\\
0 & k \text{\ is odd}\\
\end{cases}.
\end{align}


\section{Application to cat state generation protocols}\label{app:cat state breeding}

It has been widely shown that a significant challenge in approximating GKP states lies in the implementation of PNR detectors when preparing cat states \cite{takase2021generation,endo2023non,takase2023gottesman,takase2024generation}, which typically serves as the initial step in generating GKP states \cite{vasconcelos2010all,weigand2018generating,konno2024logical}. The first scheme for generating cat states was proposed in \cite{dakna1997generating}, based on a breeding process using a squeezed input state and PNR detection. Later, the first experimental realization of approximate small-amplitude cat states, conditioned by single-photon detection, was reported in \cite{wenger2004pulsed} and \cite{ourjoumtsev2006generating}, following the theoretical proposal in \cite{lund2004conditional}. Additionally, a method for breeding large-amplitude cat states from small-amplitude ones was proposed in \cite{lund2004conditional}. Subsequent theoretical and experimental work demonstrated that the roles of the squeezed state and PNR detector could be interchanged by breeding from Fock states with homodyne detection \cite{ourjoumtsev2007generation}. However, the realization of a ``true'' PNR detector and deterministic generation of Fock states remains challenging tasks \cite{deng2024quantum}. The generation of cat states with higher probabilities remains an open problem in the preparation of logical states and error correction \cite{de2022error,campagne2020quantum,eickbusch2022fast}. In this section, we investigate the performance of using ON-OFF detectors in cat state breeding.

\subsection{Explicit expression of Schr\"odinger cat states }

The Schr\"odinger cat states are defined as follows: 
\begin{align}
|{\rm cat}\>^+_\alpha=&\map N^{-1}_\alpha \left(|\underline{\alpha}\>+|\underline{-\alpha}\>\right)    \\
=& \map N^{-1}_\alpha e^{-\frac{|\alpha|^2}{2}}\sum_{j=0}^\infty \frac{2\alpha^{2j}}{\sqrt{(2j)!}}|2j\>\label{eqd2}\\
|{\rm cat}\>^{-}_\alpha=&\map N^{-1}_\alpha \left(|\underline{\alpha}\>-|\underline{-\alpha}\>\right)    \\
=& \map N^{-1}_\alpha e^{-\frac{|\alpha|^2}{2}}\sum_{j=0}^\infty \frac{2\alpha^{2j+1}}{\sqrt{(2j+1)!}}|2j+1\>\label{eqd4}
\end{align}
where $\map N_{\alpha}=\sqrt{2+2e^{-2|\alpha|^2}}$ is a constant for normalization. 

\subsection{Breeding via photon subtraction}

\subsubsection{Breeding cat states from squeezed states}

In practice, the Schr\"odinger cat state $|{\rm cat}\>_\alpha\propto |\underline{\alpha}\>+|\underline{-\alpha}\>$ is usually generated in the following process \cite{dakna1997generating,takase2023gottesman}: 
\begin{align}
|{\rm cat}\>_\alpha\approx \<k| V(|z\>_{\rm sq}\otimes |0\>),
\end{align}
where $V$ denotes a beamsplitter operation: $VaV^\dag = \sqrt {1-\eta} a-i\sqrt{\eta}b$, $VbV^\dag = -i\sqrt{\eta} a+\sqrt{1-\eta}b$ with $a(b)$ being the annihilation operator of the first (second) mode, $\eta$ being the reflectivity satisfying $0\le \eta \le 1$, $|z\>_{\rm sq}=\frac 1 {\sqrt{\cosh r }} \sum_{j=0}^\infty  (-e^{i\phi}\tanh r )^j  \frac{\sqrt{(2j)!}}{2^j j!}|2j\>$ is a single-mode squeezed vacuum state, $\<k|$ denotes a probabilistic conditioning that projecting the input onto a Fock state $|k\>,k\in \mathbb N$.

Using the result in Eq. (\ref{eqc8}), the explicit expression of the output state before photon detection can be derived:
\begin{align}\label{eqc6}
|\Psi\>= & \int \frac{\d^2 \alpha}{\pi} \frac{1}{\sqrt{\cosh r}} e^{-\tanh r \frac{\alpha^{*2}}{2 }}
e^{-\frac{|\alpha|^2}{2}} \left|\underline{\sqrt{1-\eta }\,\alpha}\right\>\left|\underline{\sqrt \eta \, i\alpha}\right\>. 
\end{align}

Then, one can probabilistically generate a cat state by projecting the second mode into a number state $|k\>,k\in \mathbb N$. The explicit expression for the resulting state is:

\begin{align}
|\widetilde{\rm cat}\>_{\sqrt k}\propto  & \int \frac{\d^2 \alpha}{\pi } \frac{1}{\sqrt{\cosh r}} e^{-\tanh r \frac{\alpha^{*2}}{2 }}
e^{-\frac{|\alpha|^2}{2}} e^{-\frac{(1-\eta)|\alpha|^2}{2}}\frac{(1-\eta)^{\frac k 2 }\alpha^{k}}{\sqrt{k!}} \left|\underline{\sqrt \eta\, i\alpha}\right\>\\
=& \sum_{j=0}^\infty \int \frac{\d^2 \alpha}{ \pi}\frac{1}{\sqrt{\cosh r}} e^{-\tanh r \frac{\alpha^{*2}}{2 }}
e^{-|\alpha|^2}  \frac{(1-\eta)^{\frac k 2 }\eta^{\frac j 2 }i^j\alpha^{k+j}}{\sqrt{k!j!}} \left|j\right\>\label{eqd11}\\
=& 
\begin{cases}
&\sum_{j=0}^\infty \int \frac{\d^2 \alpha}{ \pi}\frac{1}{\sqrt{\cosh r}} \left(-\tanh r \frac{\alpha^{*2}}{2 }\right)^{(k+2j)/2}\frac{1}{(k/2+j)!}
e^{-|\alpha|^2}  \frac{(1-\eta)^{\frac k 2 }\eta^{j  }i^{2j}\alpha^{k+2j}}{\sqrt{k!(2j)!}} \left|2j\right\>,\ \ \ \ \ \ \ \ \ \ \ \ (k{\rm\ is\ even})\\
&\sum_{j=0}^\infty \int \frac{\d^2 \alpha}{ \pi}\frac{1}{\sqrt{\cosh r}} \left(-\tanh r \frac{\alpha^{*2}}{2 }\right)^{(k+2j+1)/2}\frac 1 {((k+2j+1)/2)!}
e^{-|\alpha|^2 } \frac{(1-\eta)^{\frac k 2 }\eta^{j+1/2  }i^{2j+1}\alpha^{k+2j+1}}{\sqrt{k!(2j+1)!}} \left|2j+1\right\>,\\ &(k{\rm\ is\ odd})
\end{cases}\label{eqd5}\\
=& 
\begin{cases}
&\sum_{j=0}^\infty  \frac{1}{\sqrt{\cosh r}} \left(-\frac{1}{2 }\tanh r \right)^{(k+2j)/2}\frac{(k+2j)!}{(k/2+j)!}
\frac{(1-\eta)^{\frac k 2 }\eta^{j  }i^{2j}}{\sqrt{k!(2j)!}} \left|2j\right\>,\ \ \ \ \ \ \ \ \ \ \ \ (k{\rm\ is\ even})\\
&\sum_{j=0}^\infty \frac{1}{\sqrt{\cosh r}} \left(-\frac{1}{2 }\tanh r \right)^{(k+2j+1)/2}\frac {(k+2j+1)!} {((k+2j+1)/2)!}
\frac{(1-\eta)^{\frac k 2 }\eta^{j+1/2  }i^{2j+1}}{\sqrt{k!(2j+1)!}} \left|2j+1\right\>,\ \ \ (k{\rm\ is\ odd})
\end{cases}\label{eqc13}
\end{align}
where Eq. (\ref{eqd5}) is obtained from the rotational symmetry of the amplitude to be integrated.

\subsubsection{Application of MPNR detectors}

As shown in Section \ref{app:effective implementationo f PNR with ON-OFF}, PNR detectors can be effectively implemented using ON-OFF detectors with assistance of a beamsplitter network. For an ideal PNR detector, if we probabilistically leave the state unchanged when the outcome is $k$ and discard the output with other outcomes, the state will be projected onto a Fock state $|n\>$. Now, let's consider an $n$-port balanced  beamsplitter with each output port being measured with an ON-OFF detector. The goal is to estimate the photon number $N$. When the measurement outcome is $k$, it is associated with the following projector: 
\begin{align}
\map P_{{\rm ON-OFF},k}&=\bigoplus_{\sigma_g\in \map S_{n,k}}U_{\sigma_g} \left(|0\>\<0|^{\otimes n-k}\otimes \widetilde I^{\otimes k}\right) U_{\sigma_g}^\dag,\ \ \ \ (k\le n)   \label{eqc10}
\end{align}
where $U_{\sigma_g}$ denotes the permutation unitary operation, $\map S_{n,k}$ represents the permutation group for two sets of identical inputs, with $(n-k)$ and $k$ elements respectively, $\widetilde I_j=I_j-|0\>\<0|_j$ denotes a projector of the $j$-th mode that excludes the corresponding vacuum state $|0\>\<0|_j$. 

Now, let's look at the breeding process. A beamsplitter will generate interference between a single-mode squeezed state and a vacuum state. The output state can be written in Eq. (\ref{eqc6}). Then, one can implement an $n$-port balanced beamsplitter acting on the first mode of the state in Eq. (\ref{eqc6}) and to achieve the following state: 
\begin{align}
|\Psi'\>= & \int \frac{\d^2 \alpha}{\pi} \frac{1}{\sqrt{\cosh r}} e^{-\tanh r \frac{\alpha^{*2}}{2 }}
e^{-\frac{|\alpha|^2}{2}} \left|\underline{\frac{\sqrt{1-\eta }\,\alpha}{\sqrt n }}\right\>^{\otimes n}\left|\underline{\sqrt \eta \, i\alpha}\right\>.\label{eqc11}
\end{align}
Further, the photon number measurement with ON-OFF detectors gives an outcome $k$. The whole process will project the overall state as follows: 
\begin{align}
\rho_{\widetilde{\rm cat}}\propto &\Tr \left\{ \left\{ \left[\bigoplus_{\sigma_g\in \map S_{n,k}}U_{\sigma_g} \left(|0\>\<0|^{\otimes n-k}\otimes \widetilde I^{\otimes k}\right) U_{\sigma_g}^\dag\right] \otimes I \right\}|\Psi'\>\<\Psi'| \right\}\\
= &\frac{n!}{(n-k)!k!}\cdot \Tr \left[\left( |0\>\<0|^{\otimes n-k}\otimes \widetilde I^{\otimes k} \otimes I  \right)|\Psi'\>\<\Psi'|\right]\label{eqc13a}\\ 
= &\frac{n!}{(n-k)!k!}\cdot \int \frac{\d^2 \alpha\d^2 \beta }{ \pi^2} \frac 1 {\cosh r}e^{-\tanh r \frac{\alpha^{*2}}{2 }}
e^{-\frac{|\alpha|^2}{2}} e^{-\tanh r \frac{\beta^{*2}}{2 }}
e^{-\frac{|\beta|^2}{2}} \nonumber \\
&\times \exp\left[-\frac{(1-\eta)(|\alpha|^2+|\beta|^2)}{2}\right] \left\{\exp\left[\frac{(1-\eta)\alpha\beta^*}{n}\right]-1\right\}^k\nonumber \\
&\times \left|\underline{\sqrt \eta \, i\alpha}\right\>\left\<\underline{\sqrt \eta \, i\beta}\right|\label{eqc15ppp}
\end{align}
where Eq. (\ref{eqc13a}) is derived using the fact that the state $|\Psi'\>$ is invariant under any permutation operation on the first $n$ modes.

Without losing the generality, let's make the following assumption: the number of ON-OFF detectors $n$ is much larger that of the average photon number of the input state. On this account, we will have the following approximation: 
\begin{align}
\left.\rho_{\widetilde{\rm cat}}\right|_{1-\eta=\mathcal O(n^{-1})\atop n\gg \sinh^2 r}\propto &\frac{n!}{(n-k)!k!}\cdot \int \frac{\d^2 \alpha\d^2 \beta }{ \pi^2}\frac 1 {\cosh r}e^{-\tanh r \frac{\alpha^{*2}}{2 }}
e^{-\frac{|\alpha|^2}{2}} e^{-\tanh r \frac{\beta^{*2}}{2 }}
e^{-\frac{|\beta|^2}{2}} \nonumber \\
&\times \exp\left[-\frac{(1-\eta)(|\alpha|^2+|\beta|^2)}{2}\right] \left[\frac{(1-\eta)\alpha\beta^*}{n}\right]^k\left[1+\frac{k}{2}\frac{(1-\eta)\alpha\beta^*}{n}+\mathcal O\left(\frac{k^2(1-\eta)^2(\alpha\beta^*)^2}{n^2}\right)\right]\nonumber \\
&\times \left|\underline{\sqrt \eta \, i\alpha}\right\>\left\<\underline{\sqrt \eta \, i\beta}\right|\\
=&\frac{n!}{(n-k)!k!}\cdot \int \frac{\d^2 \alpha\d^2 \beta }{ \pi^2}\frac 1 {\cosh r}e^{-\tanh r \frac{\alpha^{*2}}{2 }}
e^{-|\alpha|^2} e^{-\tanh r \frac{\beta^{*2}}{2 }}
e^{-|\beta|^2}\nonumber \\
&\times \left[\frac{(1-\eta)\alpha\beta^*}{n}\right]^k\left[1+\frac{k}{2}\frac{(1-\eta)\alpha\beta^*}{n}+\mathcal O\left(\frac{k^2(1-\eta)^2(\alpha\beta^*)^2}{n^2}\right)\right]\nonumber \\
&\times \sum_{j,\ell=0}^\infty \frac{\eta^{(j+\ell)/2}i^j(-i)^\ell\alpha^j\beta^{*\ell }}{\sqrt{j!\ell!}}|j\>\<\ell|.\label{eqd20}
\end{align}

Comparing the state in Eq. (\ref{eqd20}) with Eq. (\ref{eqd11}), one have the following relation: 
\begin{align}
F_{\rm mpnr}&=F_{\rm pnr}- \mathcal O\left(\frac{1-\eta}{n}\right) . 
\end{align}
where $F_{\rm pnr}= |\<{\rm cat}|_{\sqrt k}|\widetilde{\rm cat}\>_{\sqrt k}|^2$ is the fidelity achieved by using the state $|\widetilde{\rm cat}\>_{\sqrt k}$ in Eq. (\ref{eqd11}) with an ideal PNR detector, $F_{\rm mpnr}=\<{\rm cat}|_{\sqrt k}\rho_{\widetilde{\rm cat}}|{\rm cat}\>_{\sqrt k}$ is the fidelity achieved by a MPNR detector. 



Therefore, one can approximately achieve a pure state as that in Eq. (\ref{eqc13}).


\begin{figure*}
\centering\includegraphics[width=0.93\textwidth,trim=2 2 2 2,clip,angle=0]{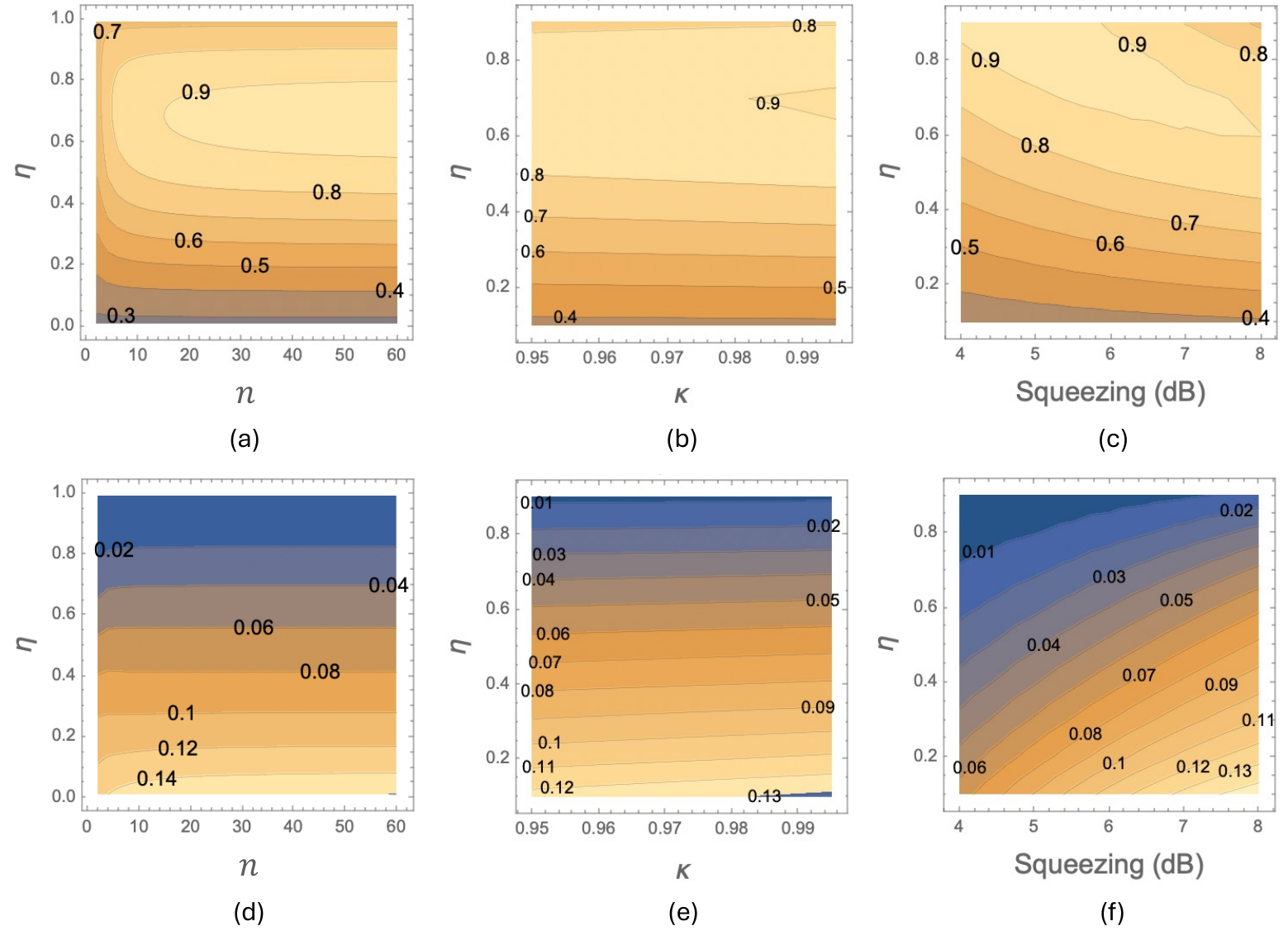}
\caption{Numerical analysis of cat-state generation via breeding protocols. Here we illustrate the two-photon subtraction scenario. 
Fidelity~(a) and success probability~(d) as functions of transmissivity $\eta$ and ON-OFF detector number $n$ (efficiency $\kappa=1$, squeezing 7~dB). Fidelity~(b) and success probability~(e) as functions of transmissivity $\eta$ and efficiency $\kappa$ (squeezing 7~dB, $n=20$ ON-OFF detectors). Fidelity (c) and success probability (f) as functions of transmissivity $\eta$ and squeezing level (efficiency $\kappa=1$, $n=20$ ON-OFF detectors). All data shown in the figures were obtained through numerical simulations using a 32-dimensional truncated space.}
\label{fig:fid_prob_cat_breeding_k=2_2}
\end{figure*}



\subsubsection{Scenario with inefficient ON-OFF detectors}

Without losing the generality, let's assume that all ON-OFF detectors have the same efficiency $\kappa$. In this case, the state in Eq. (\ref{eqc11}) will be changed into the following state: 
\begin{align}
|\Psi'\>= & \int \frac{\d^2 \alpha}{\pi} \frac{1}{\sqrt{\cosh r}} e^{-\tanh r \frac{\alpha^{*2}}{2 }}
e^{-\frac{|\alpha|^2}{2}} \left|\underline{\frac{\sqrt{(1-\eta)\kappa }\,\alpha}{\sqrt n }}\right\>^{\otimes n}\left|\underline{\sqrt {\eta} \, i\alpha}\right\>\left|\underline{\frac{\sqrt{(1-\eta)(1-\kappa) }\,\alpha}{\sqrt n }}\right\>_{\rm env}^{\otimes n},
\end{align}
where $|\cdot\>_{\rm env}$ denote the environment modes. 

After appliying the $n$ ON-OFF detectors, the overall state as follows: 
\begin{align}
\rho_{\widetilde{\rm cat}}\propto &\Tr \left\{ \left\{ \left[\bigoplus_{\sigma_g\in \map S_{n,k}}U_{\sigma_g} \left(|0\>\<0|^{\otimes n-k}\otimes \widetilde I^{\otimes k}\right) U_{\sigma_g}^\dag\right] \otimes I \right\}|\Psi'\>\<\Psi'| \right\}\\
= &\frac{n!}{(n-k)!k!}\cdot \Tr \left[\left( |0\>\<0|^{\otimes n-k}\otimes \widetilde I^{\otimes k} \otimes I  \right)|\Psi'\>\<\Psi'|\right]\\ 
= &\frac{n!}{(n-k)!k!}\cdot \int \frac{\d^2 \alpha\d^2 \beta }{ \pi^2} \frac 1 {\cosh r}e^{-\tanh r \frac{\alpha^{*2}}{2 }}
e^{-\frac{|\alpha|^2}{2}} e^{-\tanh r \frac{\beta^{*2}}{2 }}
e^{-\frac{|\beta|^2}{2}} \nonumber \\
&\times \exp\left[-\frac{(1-\eta)\kappa(|\alpha|^2+|\beta|^2)}{2}\right] \left\{\exp\left[\frac{(1-\eta)\kappa\alpha\beta^*}{n}\right]-1\right\}^k\nonumber \\
&\times \exp\left[-\frac 1 2 (1-\eta)(1-\kappa) \left(|\alpha|^2+|\beta|^2-2\beta^*\alpha\right)\right]\nonumber \\
&\times \left|\underline{\sqrt \eta \, i\alpha}\right\>\left\<\underline{\sqrt \eta \, i\beta}\right|\label{eqc22aa}\\
= &\frac{n!}{(n-k)!k!}\cdot \int \frac{\d^2 \alpha\d^2 \beta }{ \pi^2} \frac 1 {\cosh r}e^{-\tanh r \frac{\alpha^{*2}}{2 }}
e^{-\frac{|\alpha|^2}{2}} e^{-\tanh r \frac{\beta^{*2}}{2 }}
e^{-\frac{|\beta|^2}{2}} \nonumber \\
&\times \exp\left[-\frac{(1-\eta)(|\alpha|^2+|\beta|^2)}{2}\right] \left\{\exp\left[\frac{(1-\eta)\kappa\alpha\beta^*}{n}\right]-1\right\}^k\nonumber \\
&\times \exp\left[ (1-\eta)(1-\kappa) \alpha\beta^*\right]\nonumber \\
&\times \left|\underline{\sqrt \eta \, i\alpha}\right\>\left\<\underline{\sqrt \eta \, i\beta}\right|\label{eqc23ppp}
\end{align}
where Eq. (\ref{eqc22aa}) is derived using the fact that the state $|\Psi'\>$ is invariant under any permutation operation on the first $n$ modes. 

Provided that $1-\kappa$ is a small number, we have 
\begin{align}
\rho_{\widetilde{\rm cat}}\propto 
&\frac{n!}{(n-k)!k!}\cdot \int \frac{\d^2 \alpha\d^2 \beta }{ \pi^2} \frac 1 {\cosh r}e^{-\tanh r \frac{\alpha^{*2}}{2 }}
e^{-\frac{|\alpha|^2}{2}} e^{-\tanh r \frac{\beta^{*2}}{2 }}
e^{-\frac{|\beta|^2}{2}} \nonumber \\
&\times \exp\left[-\frac{(1-\eta)(|\alpha|^2+|\beta|^2)}{2}\right]\left[\frac{(1-\eta)\kappa \alpha\beta^*}{n}\right]^k\nonumber \\
&\times \left[1+\frac{k}{2}\frac{(1-\eta)\kappa\alpha\beta^*}{n}+\frac{(k+3k^2)(1-\eta)^2\kappa^2}{24n^2}(\alpha\beta^*)^2+\mathcal O\left(\frac 1 {n^3}\right)\right]\nonumber \\
&\times \left[1+ (1-\eta)(1-\kappa) \alpha\beta^*+\frac 1 2 (1-\eta)^2(1-\kappa)^2 (\alpha\beta^*)^2+\mathcal O((1-\kappa)^3)\right]\nonumber \\
&\times \left|\underline{\sqrt \eta \, i\alpha}\right\>\left\<\underline{\sqrt \eta \, i\beta}\right|\\
= &\frac{n!}{(n-k)!k!}\cdot \int \frac{\d^2 \alpha\d^2 \beta }{ \pi^2} \frac 1 {\cosh r}e^{-\tanh r \frac{\alpha^{*2}}{2 }}
e^{-\frac{|\alpha|^2}{2}} e^{-\tanh r \frac{\beta^{*2}}{2 }}
e^{-\frac{|\beta|^2}{2}} \nonumber \\
&\times \exp\left[-\frac{(1-\eta)(|\alpha|^2+|\beta|^2)}{2}\right]\left[\frac{(1-\eta)\kappa \alpha\beta^*}{n}\right]^k\nonumber \\
&\times \left\{1+\left[\frac{k}{2}\frac{(1-\eta)\kappa}{n}+(1-\eta)(1-\kappa)\right]\alpha\beta^*\right.\nonumber \\
&+ \left.\left[\frac{(k+3k^2)(1-\eta)^2\kappa^2}{24n^2}+\frac 1 2 (1-\eta)^2(1-\kappa)^2+\frac{k(1-\eta)^2\kappa(1-\kappa)}{2n}\right](\alpha\beta^*)^2\right.\nonumber \\
&+\left[\frac{(k+3k^3)(1-\eta)^3\kappa^2(1-\kappa)}{24n^2}+\frac{k(1-\eta)^3\kappa(1-\kappa)^2}{4n}\right](\alpha\beta^*)^3\nonumber \\
&\left.+\frac{(k+3k^2)(1-\eta)^4\kappa^2(1-\kappa)^2}{48n^2}(\alpha\beta^*)^4+\mathcal O\left(\frac 1 {n^3},(1-\kappa)^3\right)\right\} \left|\underline{\sqrt \eta \, i\alpha}\right\>\left\<\underline{\sqrt \eta \, i\beta}\right|\label{eqc25sss}
\end{align}

Here, we perform a numerical simulation for the cat-state breeding process. In Fig. \ref{fig:fid_prob_cat_breeding_k=2_2}, we also examine the fidelity and success probability as functions of transmisitivity $\eta$, efficiency $\kappa$, and squeezing. The trade-off between fidelity and success probability is illustrated in Fig. \ref{fig:fid_prob_cat_breeding_k=2} of the main text. \QZ{In particular, we perform the evaluation in a truncated Fock basis with dimension $d=32$. 
We represent $\hat a$, $\hat a^\dagger$, and $\hat n$ as $d\times d$ matrices, construct the target cat state $\ket{\mathrm{cat}}_{\sqrt 2}$ and the conditional output vectors $\{\ket{\psi_j}\}$ from Eq.~(\ref{eqc25sss}). 
For each choice of detector number $n\in\{2,4,10,20,\infty\}$ and beamsplitter transmissivity $\tau\in\{1,0.95,0.7\}$ we scan the pixel efficiency $\eta$ on a grid $\eta\in[0.002,0.998]$ with step $0.002$, compute the success probability $P(\eta)=\mathrm{Tr}[\rho_{\rm out}(\eta)]$ and fidelity $F(\eta)=\bra{\mathrm{cat}}\rho_{\rm out}(\eta)\ket{\mathrm{cat}}$ with $\rho_{\rm out}(\eta)$ being the unnormalized form of $\rho_{\widetilde {\rm cat}}$. From the resulting set $\{F(\eta),P(\eta)\}$ we construct the upper envelope $P_{\max}(F)$ by, for each fidelity bin, taking the maximal success probability over all $\eta$ with $F(\eta)\ge F$; these envelopes are the curves plotted in Fig.~\ref{fig:fid_prob_cat_breeding_k=2}. }

\subsection{Cat state breeding with generalized photon subtraction}

It is proved in Ref. \cite{takase2021generation} that one can approximate a generalized Schr{\"o}dinger cat state as: $|{\rm cat}_{\alpha,k}\>=(2(1+(-1)^k\exp(-2|\alpha|^2))^{-1/2}\left[|\alpha\>+(-1)^k|-\alpha\>\right]$ through a generalized photon subtraction process as follows:
\begin{align}
|\widetilde{{\rm cat}_{\sqrt k,k}}\>\propto \left[\<k|\otimes S(-r)\right]U|r\>_{\rm sq}|-r\>_{\rm sq}
\end{align}
where $U$ refers to the beam splitter operation $UaU^\dag = \sqrt {1-\eta} a+\sqrt{\eta}b$, $UbU^\dag = -\sqrt{\eta} a+\sqrt{1-\eta}b$ with $a(b)$ being the annihilation operator of the first (second) mode, $\eta$ being the reflectivity satisfying $0\le \eta \le 1$, 
$|z\>_{\rm sq}=\frac 1 {\sqrt{\cosh r }} \sum_{j=0}^\infty  (-e^{i\phi}\tanh r )^j  \frac{\sqrt{(2j)!}}{2^j j!}|2j\>$ denotes a single-mode squeezed vacuum state.

Using the result in Eq. (\ref{eqc8}), the explicit expression of output state before photon detection can be derived ($k>0$):
\begin{align}
S(r)|\widetilde{{\rm cat}_{\sqrt k,k}}\>\propto & \int \frac{\d^2 \alpha\d^2\beta}{\pi^2} \frac{1}{\cosh r} e^{- \frac{\tanh r(\alpha^{*2}-\beta^{*2})}{2 }}e^{-\frac{|\alpha|^2+|\beta|^2}{2}} \<k\left|\underline{\sqrt{1-\eta }\,\alpha-\sqrt \eta \,\beta}\right\>\left|\underline{\sqrt \eta \, \alpha+\sqrt{1-\eta}\beta}\right\>\\
= & \int \frac{\d^2 \gamma\d^2\delta}{\pi^2} \frac{1}{\cosh r} e^{- \tanh r\frac{(1-2\eta)\gamma^{*2}-(1-2\eta)\delta^{*2}+4\sqrt{\eta(1-\eta)}\gamma^*\delta^*}{2 }}e^{-\frac{|\gamma|^2+|\delta|^2}{2}} \<k\left|\underline{\gamma}\right\>\left|\underline{\delta}\right\>\\ 
= & \sum_{\ell=0}^\infty \int \frac{\d^2 \gamma\d^2\delta}{\pi^2} \frac{1}{\cosh r} e^{- \tanh r\frac{(1-2\eta)\gamma^{*2}-(1-2\eta)\delta^{*2}+4\sqrt{\eta(1-\eta)}\gamma^*\delta^*}{2 }}e^{-|\gamma|^2-|\delta|^2} \frac{\gamma^k \delta^\ell}{\sqrt{k!\ell!}}|\ell\> 
\end{align}

\subsubsection{Example with two-photon subtraction: ideal PNR case}

In the case $k=2$, we have: 
\begin{align}
S(r)|\widetilde{{\rm cat}_{\sqrt 2,2}}\>\propto  & \sum_{\ell=0}^\infty \int \frac{\d^2 \gamma\d^2\delta}{\pi^2} \frac{1}{\cosh r} e^{- \frac{\tanh r(1-2\eta)\gamma^{*2}-\tanh r(1-2\eta)\delta^{*2}+4\tanh r\sqrt{\eta(1-\eta)}\gamma^*\delta^*}{2 }}e^{-|\gamma|^2-|\delta|^2} \frac{\gamma^2 \delta^\ell}{\sqrt{2\ell!}}|\ell\> \\
=&\sum_{\ell=0}^\infty \int \frac{\d^2 \gamma\d^2\delta}{\pi^2} \frac{1}{\cosh r} \left(- \frac{\tanh r(1-2\eta)}{2 }+2(\tanh r)^2\eta(1-\eta)\delta^{*2}\right)
e^{ \frac{\tanh r(1-2\eta)\delta^{*2}}{2 }}e^{-|\gamma|^2-|\delta|^2} \frac{|\gamma|^4 \delta^\ell}{\sqrt{2\ell!}}|\ell\> \\
=&2\sum_{\ell=0}^\infty \int \frac{\d^2\delta}{\pi} \frac{1}{\cosh r} \left(- \frac{\tanh r(1-2\eta)}{2 }+2(\tanh r)^2\eta(1-\eta)\delta^{*2}\right)e^{ \frac{\tanh r(1-2\eta)\delta^{*2}}{2 }}e^{-|\delta|^2} \frac{ \delta^\ell}{\sqrt{2\ell!}}|\ell\> \\
=&2\sum_{\ell=0}^\infty\sum_{j=0}^\infty  \int \frac{\d^2\delta}{\pi} \frac{1}{\cosh r} \left(- \frac{\tanh r(1-2\eta)}{2 }+2(\tanh r)^2\eta(1-\eta)\delta^{*2}\right) \frac{[\tanh r(1-2\eta)]^j\delta^{*2j}}{2^j j! }e^{-|\delta|^2} \frac{ \delta^\ell}{\sqrt{2\ell!}}|\ell\> \\
=&-\frac{\sqrt 2}{2}\sum_{j=0}^\infty  \frac{\tanh r(1-2\eta)}{\cosh r}  \frac{[\tanh r(1-2\eta)]^j}{2^j j! }  \sqrt{(2j)!}|2j\>\\
&+4\sqrt 2\sum_{j=1}^\infty   \frac{\tanh r}{\cosh r} \frac{\eta(1-\eta) j}{(1-2\eta)}\frac{[\tanh r(1-2\eta)]^{j}}{2^j j! }  \sqrt{(2j)!}|2j\>\\
=&\sqrt 2 \sum_{j=0}^\infty   \frac{\tanh r }{\cosh r} \left[\frac{4\eta(1-\eta) j}{1-2\eta}-\frac 1 2 (1-2\eta)\right]\frac{[\tanh r(1-2\eta)]^{j}\sqrt{(2j)!}}{2^j j! }  |2j\>
\end{align}

Here, we conduct a numerical calculation for this breeding process in a 16-dimensional truncated state space. It shows that, with 5 dB input squeezing level, and transmissivity $\eta=0.7815$, one can achieve fidelity $F=0.988$ with a probability of success 5.18$\%$.

\subsubsection{Example with two-photon subtraction: inefficient MPNR case}

Let us look at an MPNR detector associated with a balanced beamsplitter operation. Considering the photon loss at each ON-OFF detector, the output state before post-selection will be: 
\begin{align}
|\Psi'\>= & \int \frac{\d^2 \gamma\d^2\delta}{\pi^2} \frac{1}{\cosh r} e^{- \tanh r\frac{(1-2\eta)\gamma^{*2}-(1-2\eta)\delta^{*2}+4\sqrt{\eta(1-\eta)}\gamma^*\delta^*}{2 }}e^{-\frac{|\gamma|^2+|\delta|^2}{2}} \left|\underline{\sqrt{\frac{\kappa}{ n}}\gamma}\right\>^{\otimes n}\left|\underline{\delta}\right\>\left|\underline{\sqrt{\frac{1-\kappa}{n}}\gamma}\right\>_{\rm env}^{\otimes n},
\end{align}
where $|\psi\>_{\rm env}$ denotes the mode of the environment in pure loss. 

Further, the photon number measurement with ON-OFF detectors gives an outcome $k$. The whole process will project the overall state as follows: 
\begin{align}
\rho_{\widetilde{\rm cat}}\propto &\Tr \left\{ \left\{ \left[\bigoplus_{\sigma_g\in \map S_{n,k}}U_{\sigma_g} \left(|0\>\<0|^{\otimes n-k}\otimes \widetilde I^{\otimes k}\right) U_{\sigma_g}^\dag\right] \otimes I \right\}|\Psi'\>\<\Psi'| \right\}\\
= &\frac{n!}{(n-k)!k!}\cdot \Tr \left[\left( |0\>\<0|^{\otimes n-k}\otimes \widetilde I^{\otimes k} \otimes I  \right)|\Psi'\>\<\Psi'|\right]\\ 
= &\frac{n!}{(n-k)!k!}\cdot \int \frac{\d^2 \gamma\d^2 \delta \d^2 \gamma' \d^2 \delta'}{ \pi^4} \frac 1 {(\cosh r)^2}\nonumber \\
&\times e^{-\tanh r (1-2\eta)\frac{\gamma^{*2}}{2 }}
e^{\tanh r (1-2\eta)\frac{\delta^{*2}}{2 }}e^{-2\sqrt{\eta(1-\eta)}\gamma^*\delta^*}e^{-\frac{|\gamma|^2}{2}} 
e^{-\frac{|\delta|^2}{2}} \nonumber \\
&\times e^{-\tanh r (1-2\eta)\frac{\gamma^{'2}}{2 }}
e^{\tanh r (1-2\eta)\frac{\delta^{'2}}{2 }}e^{-2\sqrt{\eta(1-\eta)}\gamma^{'}\delta^{'}}e^{-\frac{|\gamma'|^2}{2}} 
e^{-\frac{|\delta'|^2}{2}} \nonumber \\
&\times \exp\left[-\frac{|\gamma|^2+|\gamma'|^2}{2}\right] \left\{\exp\left[\frac{\kappa\gamma\gamma^{'*}}{n}\right]-1\right\}^k\exp\left[ (1-\kappa) \gamma^{'*}\gamma\right] \left|\underline{\delta }\right\>\left\<\underline{\, \delta'}\right|\label{eqc40ggg}
\end{align}
where $k=2$ in the two-photon subtraction process. 

Without loss of generality, let's assume the conditions $|\kappa\gamma \gamma^{'*}/n|\ll 1$ and $\left| (1-\kappa) \left(|\gamma|^2+|\gamma'|^2-2\gamma^{'*}\gamma\right)\right|\ll 1$  due to the Gaussian prior probability. Then, we have: 
\begin{align}
\rho_{\widetilde{\rm cat}}\propto 
&\frac{n!}{2(n-2)!}\cdot \int \frac{\d^2 \gamma\d^2 \delta \d^2 \gamma' \d^2 \delta'}{ \pi^4} \frac 1 {(\cosh r)^2}\nonumber \\
&\times e^{-\tanh r (1-2\eta)\frac{\gamma^{*2}}{2 }}
e^{\tanh r (1-2\eta)\frac{\delta^{*2}}{2 }}e^{-2\sqrt{\eta(1-\eta)}\gamma^*\delta^*}e^{-\frac{|\gamma|^2}{2}} 
e^{-\frac{|\delta|^2}{2}} \nonumber \\
&\times e^{-\tanh r (1-2\eta)\frac{\gamma^{'2}}{2 }}
e^{\tanh r (1-2\eta)\frac{\delta^{'2}}{2 }}e^{-2\sqrt{\eta(1-\eta)}\gamma^{'}\delta^{'}}e^{-\frac{|\gamma'|^2}{2}} 
e^{-\frac{|\delta'|^2}{2}} \nonumber \\
&\times \exp\left[-\frac{|\gamma|^2+|\gamma'|^2}{2}\right] \left(\frac{\kappa\gamma\gamma^{'*}}{n}\right)^2\left\{1+\frac{\kappa\gamma\gamma^{'*}}{n}+\frac{7}{12}\left(\frac{\kappa\gamma\gamma^{'*}}{n}\right)^2+\mathcal O\left[ \left(\frac{\kappa\gamma\gamma^{'*}}{n}\right)^3\right]\right\}\nonumber \\
&\times \left[1+ (1-\kappa) \gamma^{'*}\gamma+\frac 1 2 (1-\kappa)^2 (\gamma^{'*}\gamma)^2+ \mathcal O((1-\kappa)^3)\right] \left|\underline{\delta }\right\>\left\<\underline{\, \delta'}\right|\\
= &\frac{n!\kappa^2}{n^2(n-2)!}\left\{|\Psi_0\>\<\Psi_0|+\left(\frac \kappa n+1-\kappa\right) |\Psi_1\>\<\Psi_1|+\left[\frac{\kappa(1-\kappa)}{n}+\frac{7\kappa^2}{12n^2}+\frac {(1-\kappa)^2}{2}\right]|\Psi_2\>\<\Psi_2|\right.\nonumber \\
&\left.+\left[\frac{7\kappa^2(1-\kappa)}{12n^2}+\frac{\kappa(1-\kappa)^2}{2n}\right]|\Psi_3\>\<\Psi_3| +\frac{7\kappa^2(1-\kappa)^2}{24n^2}|\Psi_4\>\<\Psi_4|+\mathcal O\left(\frac{1}{n^3},(1-\kappa)^3\right)\right\}
\end{align}
where the explicit expressions for the vectors $\{|\Psi_0\>,\cdots,|\Psi_4\>\}$ are as follows: 
\begin{align}
|\Psi_0\>= & \int \frac{\d^2 \gamma\d^2\delta}{\pi^2} \frac{1}{\cosh r} e^{- \tanh r\frac{(1-2\eta)\gamma^{*2}-(1-2\eta)\delta^{*2}+4\sqrt{\eta(1-\eta)}\gamma^*\delta^*}{2 }}e^{-\frac{|\gamma|^2+|\delta|^2}{2}} \<2\left|\underline{\gamma}\right\>\left|\underline{\delta}\right\>\\ 
=&\sqrt 2 \sum_{j=0}^\infty   \frac{\tanh r }{\cosh r} \left[\frac{4\eta(1-\eta) j}{1-2\eta}-\frac 1 2 (1-2\eta)\right]\frac{[\tanh r(1-2\eta)]^{j}\sqrt{(2j)!}}{2^j j! }  |2j\>
\end{align}

\begin{align}
|\Psi_1\>= & \int \frac{\d^2 \gamma\d^2\delta}{\pi^2} \frac{1}{\cosh r} e^{- \tanh r\frac{(1-2\eta)\gamma^{*2}-(1-2\eta)\delta^{*2}+4\sqrt{\eta(1-\eta)}\gamma^*\delta^*}{2 }}e^{-\frac{|\gamma|^2+|\delta|^2}{2}}\gamma  \<2\left|\underline{\gamma}\right\>\left|\underline{\delta}\right\>\\
=&   \sum_{\ell=0}^\infty \int \frac{\d^2 \gamma\d^2\delta}{\pi^2} \frac{1}{\cosh r} e^{- \frac{\tanh r(1-2\eta)\gamma^{*2}-\tanh r(1-2\eta)\delta^{*2}+4\tanh r\sqrt{\eta(1-\eta)}\gamma^*\delta^*}{2 }}e^{-|\gamma|^2-|\delta|^2} \frac{\gamma^3 \delta^\ell}{\sqrt{2\ell!}}|\ell\> \\
=&   \sum_{\ell=0}^\infty \int \frac{\d^2 \gamma\d^2\delta}{\pi^2} \frac{1}{\cosh r} \left(\frac{-4(\tanh r \sqrt{\eta(1-\eta)})^3\delta^{*3}}{3}+(1-2\eta)\sqrt{\eta(1-\eta)}(\tanh r)^2\delta^* \right)\nonumber \\
&\times e^{\frac{\tanh r(1-2\eta)\delta^{*2}}{2 }}e^{-|\gamma|^2-|\delta|^2} \frac{|\gamma|^6 \delta^\ell}{\sqrt{2\ell!}}|\ell\> \\
=& 6  \sum_{\ell=0}^\infty \int \frac{\d^2\delta}{\pi} \frac{1}{\cosh r} \left(\frac{-4(\tanh r \sqrt{\eta(1-\eta)})^3\delta^{*3}}{3}+(1-2\eta)\sqrt{\eta(1-\eta)}(\tanh r)^2\delta^* \right)\nonumber \\
&\times e^{\frac{\tanh r(1-2\eta)\delta^{*2}}{2 }}e^{-|\delta|^2} \frac{ \delta^\ell}{\sqrt{2\ell!}}|\ell\> \\
=& 6  \sum_{\ell=0}^\infty \int \frac{\d^2\delta}{\pi} \frac{1}{\cosh r} \left(\frac{-4(\tanh r \sqrt{\eta(1-\eta)})^3\delta^{*3}}{3}+(1-2\eta)\sqrt{\eta(1-\eta)}(\tanh r)^2\delta^* \right)\nonumber \\
&\times \sum_{j=0}\frac{(\tanh r(1-2\eta))^j\delta^{*2j}}{2^j j!} e^{-|\delta|^2} \frac{ \delta^\ell}{\sqrt{2\ell!}}|\ell\> \\
=& 6 \frac{1}{\cosh r} \frac{-4(\tanh r \sqrt{\eta(1-\eta)})^3}{3}\sum_{j=0}^\infty \frac{(\tanh r(1-2\eta))^j}{2^j j!} \frac{ \sqrt{(2j+3)!}}{\sqrt{2}}|2j+3\> \\
&+ 6    \frac{1}{\cosh r} (1-2\eta)\sqrt{\eta(1-\eta)}(\tanh r)^2 \sum_{j=0}^\infty \frac{(\tanh r(1-2\eta))^j}{2^j j!}  \frac{ \sqrt{(2j+1)!}}{\sqrt{2}}|2j+1\> \\
=& 6 \sum_{j=1}^\infty \frac{1}{\cosh r} \frac{-8(\tanh r \sqrt{\eta(1-\eta)})^3j}{3(\tanh r (1-2\eta))}\frac{(\tanh r(1-2\eta))^{j}}{2^{j} j!} \frac{ \sqrt{(2j+1)!}}{\sqrt{2}}|2j+1\> \\
&+ 6   \sum_{j=0}^\infty  \frac{1}{\cosh r} (1-2\eta)\sqrt{\eta(1-\eta)}(\tanh r)^2 \frac{(\tanh r(1-2\eta))^j}{2^j j!}  \frac{ \sqrt{(2j+1)!}}{\sqrt{2}}|2j+1\> \\
=&  3\sqrt{2}   \sum_{j=0}^\infty  \frac{1}{\cosh r} \left[(1-2\eta)\sqrt{\eta(1-\eta)}(\tanh r)^2 - \frac{8(\tanh r \sqrt{\eta(1-\eta)})^3j}{3(\tanh r (1-2\eta))}\right]\frac{(\tanh r(1-2\eta))^j}{2^j j!}   \sqrt{(2j+1)!}|2j+1\> 
\end{align}

\begin{align}
|\Psi_2\>= & \int \frac{\d^2 \gamma\d^2\delta}{\pi^2} \frac{1}{\cosh r} e^{- \tanh r\frac{(1-2\eta)\gamma^{*2}-(1-2\eta)\delta^{*2}+4\sqrt{\eta(1-\eta)}\gamma^*\delta^*}{2 }}e^{-\frac{|\gamma|^2+|\delta|^2}{2}}\gamma^2  \<2\left|\underline{\gamma}\right\>\left|\underline{\delta}\right\>\\
=&   \sum_{\ell=0}^\infty \int \frac{\d^2 \gamma\d^2\delta}{\pi^2} \frac{1}{\cosh r} e^{- \frac{\tanh r(1-2\eta)\gamma^{*2}-\tanh r(1-2\eta)\delta^{*2}+4\tanh r\sqrt{\eta(1-\eta)}\gamma^*\delta^*}{2 }}e^{-|\gamma|^2-|\delta|^2} \frac{\gamma^4 \delta^\ell}{\sqrt{2\ell!}}|\ell\> \\
=&   \sum_{\ell=0}^\infty \int \frac{\d^2 \gamma\d^2\delta}{\pi^2} \frac{1}{\cosh r}  \left[\frac{(-2\tanh r\sqrt{\eta(1-\eta)})^4\delta^{*4}}{24}-(\tanh r)^3 (1-2\eta)\eta(1-\eta)\delta^{*2}+\frac{(\tanh r )^2(1-2\eta)^2}{8}\right]\nonumber \\
&\times e^{ \frac{\tanh r(1-2\eta)\delta^{*2}}{2 }}e^{-|\gamma|^2-|\delta|^2} \frac{|\gamma|^8 \delta^\ell}{\sqrt{2\ell!}}|\ell\> 
\end{align}

\begin{align}
=&  24 \sum_{\ell=0}^\infty \int \frac{\d^2\delta}{\pi} \frac{1}{\cosh r} \nonumber \\
&\times \left[\frac{(-2\tanh r\sqrt{\eta(1-\eta)})^4\delta^{*4}}{24}-(\tanh r)^3 (1-2\eta)\eta(1-\eta)\delta^{*2}+\frac{(\tanh r )^2(1-2\eta)^2}{8}\right]\nonumber \\
&\times e^{ \frac{\tanh r(1-2\eta)\delta^{*2}}{2 }}e^{-|\delta|^2} \frac{\delta^\ell}{\sqrt{2\ell!}}|\ell\> \\
=&  24  \sum_{j=2}^\infty \int \frac{\d^2\delta}{\pi} \frac{1}{\cosh r}  \left[\frac{(-2\tanh r\sqrt{\eta(1-\eta)})^4}{24(\tanh r (1-2\eta))^2}\right]4(j-1)j \frac{(\tanh r(1-2\eta))^{j}}{2^{j} j! }e^{-|\delta|^2} \frac{|\delta|^{4j}}{\sqrt{2(2j)!}}|2j\> \\
&+  24 \sum_{j=1}^\infty \int \frac{\d^2\delta}{\pi} \frac{1}{\cosh r}  \frac{-(\tanh r)^3 (1-2\eta)\eta(1-\eta)}{\tanh r(1-2\eta)}2j  \frac{(\tanh r(1-2\eta))^{j}}{2^{j} j! }e^{-|\delta|^2} \frac{|\delta|^{4j}}{\sqrt{2(2j)!}}|2j\> \\
&+  24 \sum_{j=0}^\infty\int \frac{\d^2\delta}{\pi} \frac{1}{\cosh r} \left[\frac{(\tanh r )^2(1-2\eta)^2}{8}\right]  \frac{(\tanh r(1-2\eta))^j}{2^j j! }e^{-|\delta|^2} \frac{|\delta|^{4j}}{\sqrt{2(2j)!}}|2j\> \\
=
&  12\sqrt 2 \sum_{j=0}^\infty\frac{(\tanh r)^2}{\cosh r} \left[\frac{(1-2\eta)^2}{8}-\eta(1-\eta)2j +\frac{8\eta^2(1-\eta)^2}{3(1-2\eta)^2}(j-1)j\right]  \frac{(\tanh r(1-2\eta))^j}{2^j j! }\sqrt{(2j)!}|2j\> 
\end{align}

\begin{align}
|\Psi_3\>= & \int \frac{\d^2 \gamma\d^2\delta}{\pi^2} \frac{1}{\cosh r} e^{- \tanh r\frac{(1-2\eta)\gamma^{*2}-(1-2\eta)\delta^{*2}+4\sqrt{\eta(1-\eta)}\gamma^*\delta^*}{2 }}e^{-\frac{|\gamma|^2+|\delta|^2}{2}}\gamma^3  \<2\left|\underline{\gamma}\right\>\left|\underline{\delta}\right\>\\
=&   \sum_{\ell=0}^\infty \int \frac{\d^2 \gamma\d^2\delta}{\pi^2} \frac{1}{\cosh r} e^{- \frac{\tanh r(1-2\eta)\gamma^{*2}-\tanh r(1-2\eta)\delta^{*2}+4\tanh r\sqrt{\eta(1-\eta)}\gamma^*\delta^*}{2 }}e^{-|\gamma|^2-|\delta|^2} \frac{\gamma^5 \delta^\ell}{\sqrt{2\ell!}}|\ell\> \\
=&  120 \sum_{\ell=0}^\infty \int \frac{\d^2\delta}{\pi} \frac{1}{\cosh r} \nonumber \\
&\times \left[\frac{(-2\tanh r \sqrt{\eta(1-\eta)})^5 \delta^{*5}}{120}+ \frac{2(\tanh r)^4(1-2\eta)( \sqrt{\eta(1-\eta)})^3\delta^{*3}}{3}-\frac{(\tanh r)^3(1-2\eta)^2\sqrt{\eta(1-\eta)}\delta^*}{4}\right]\nonumber \\
&\times e^{- \frac{-\tanh r(1-2\eta)\delta^{*2}}{2 }}e^{-|\delta|^2} \frac{ \delta^\ell}{\sqrt{2\ell!}}|\ell\> \\
=&  120  \sum_{j=0}^\infty  \frac{(\tanh r)^3}{\cosh r}  \left[\frac{-16( \sqrt{\eta(1-\eta)})^5 }{15(1-2\eta)^2}(j-1)j+ \frac{4( \sqrt{\eta(1-\eta)})^3}{3}j-\frac{(1-2\eta)^2\sqrt{\eta(1-\eta)}}{4}\right]\nonumber \\
&\times  \frac{(\tanh r(1-2\eta))^j}{2^j j! } \frac{ \sqrt{(2j+1)!}}{\sqrt{2}}|2j+1\> 
\end{align}

\begin{align}
|\Psi_4\>= & \int \frac{\d^2 \gamma\d^2\delta}{\pi^2} \frac{1}{\cosh r} e^{- \tanh r\frac{(1-2\eta)\gamma^{*2}-(1-2\eta)\delta^{*2}+4\sqrt{\eta(1-\eta)}\gamma^*\delta^*}{2 }}e^{-\frac{|\gamma|^2+|\delta|^2}{2}}\gamma^4  \<2\left|\underline{\gamma}\right\>\left|\underline{\delta}\right\>\\
=&   \sum_{\ell=0}^\infty \int \frac{\d^2 \gamma\d^2\delta}{\pi^2} \frac{1}{\cosh r} e^{- \frac{\tanh r(1-2\eta)\gamma^{*2}-\tanh r(1-2\eta)\delta^{*2}+4\tanh r\sqrt{\eta(1-\eta)}\gamma^*\delta^*}{2 }}e^{-|\gamma|^2-|\delta|^2} \frac{\gamma^6 \delta^\ell}{\sqrt{2\ell!}}|\ell\> \\
=&   720\sum_{\ell=0}^\infty \int \frac{\d^2\delta}{\pi} \frac{1}{\cosh r} \left[\frac{4 (\tanh r )^6\eta^3(1-\eta)^3\delta^{*6}}{45}-\frac{ (\tanh r )^5 (1-2\eta)\eta^2(1-\eta)^2\delta^{*4}}{3}\right.\nonumber \\
&+\left.\frac{(\tanh r)^4(1-2\eta)^2\eta (1-\eta)\delta^{*2}}{4}-\frac{(\tanh r )^3 (1-2\eta)^3}{48}\right] e^{\frac{\tanh r(1-2\eta)\delta^{*2}}{2 }}e^{-|\delta|^2} \frac{ \delta^\ell}{\sqrt{2\ell!}}|\ell\> 
\end{align}

\begin{align}
=&   720\sum_{j=0}^\infty   \frac{1}{\cosh r} \left[\frac{4 (\tanh r )^6\eta^3(1-\eta)^3\delta^{*6}}{45(\tanh r)^3(1-2\eta)^3}8(j-2)(j-1)j-\frac{ (\tanh r )^5 (1-2\eta)\eta^2(1-\eta)^2\delta^{*4}}{3(\tanh r )^2(1-2\eta)^2}4(j-1)j\right.\nonumber \\
&+\left.\frac{(\tanh r)^4(1-2\eta)^2\eta (1-\eta)\delta^{*2}}{4(\tanh r )(1-2\eta)}2j-\frac{(\tanh r )^3 (1-2\eta)^3}{48}\right] \frac{(\tanh r)^j(1-2\eta)^j}{2^j j! } \frac{ \sqrt{(2j)!}}{\sqrt{2}}|2j\> \\
=&   720\sum_{j=0}^\infty   \frac{(\tanh r)^3}{\cosh r} \left[\frac{32 \eta^3(1-\eta)^3\delta^{*6}}{45(1-2\eta)^3}(j-2)(j-1)j-\frac{4  \eta^2(1-\eta)^2\delta^{*4}}{3(1-2\eta)}(j-1)j\right.\nonumber \\
&+\left.\frac{(1-2\eta)\eta (1-\eta)\delta^{*2}}{2}j-\frac{(1-2\eta)^3}{48}\right] \frac{(\tanh r)^j(1-2\eta)^j}{2^j j! } \frac{ \sqrt{(2j)!}}{\sqrt{2}}|2j\> 
\end{align}

\section{Modeling of correlated detectors}\label{app:correlated detectors}

\subsection{Definition: two-mode error}

Consider a scenario in which a two-mode state  simultaneously click two adjacent ON-OFF detectors. In the ideal scenario, the input-output relation is supposed to be: 
\begin{align}\label{eqe1w}
\begin{cases}
|0\>\otimes |0\>&\to 0\\
|0\>\otimes |k\> &\to 1\\
|k\>\otimes |0\> &\to 1\\
|k\>\otimes |k\> &\to 2
\end{cases},\ \ \ \ \ \forall k \ge 1.
\end{align} 
If the two detectors are correlated, the input-output relation will be: 
\begin{align}\label{eqe2w}
\begin{cases}
|0\>\otimes |0\>&\to 0\\
|0\>\otimes |k\> &\to 2\\
|k\>\otimes |0\> &\to 2\\
|k\>\otimes |k\> &\to 2
\end{cases},\ \ \ \ \ \forall k \ge 1.
\end{align}
In this case, the corresponding POVM for estimating total photon number of two modes is: 
\begin{align}
 M_i'&= |0\>\<0|\otimes |0\>\<0|\\
 M_{ii}'&= I \otimes I - |0\>\<0|\otimes |0\>\<0|. 
\end{align}
The observable of the total photon number is: 
\begin{align}
\widehat N^{'}_2&=2  \left(I \otimes I - |0\>\<0|\otimes |0\>\<0|\right). 
\end{align}

Considering the scenario where the correlation error occurs with a small probability $p$, the observed photon number is: 
\begin{align}
\widetilde N_{\rm ON-OFF, cor,2}^h&=(1-p) \<\widehat N_{\rm ON-OFF,2}^h\>+p \<\widehat N_2^{'h}\>
\end{align}
where $\widehat N_{\rm ON-OFF,2}=2 \widetilde I \otimes \widetilde I+ \widetilde I \otimes |0\>\<0|+ |0\>\<0|\otimes \widetilde I$ denotes the observable of ON-OFF detectors with $\widetilde I=I-|0\>\<0|$ being a projection onto the non-vacuum state space. Accordingly, one can defined the observable $\widehat N^h_{\rm ON-OFF, cor,2}=(1-p) \widehat N^h_{\rm ON-OFF,2}+p \widehat N^{'h}_2$ for this error.

\subsection{Multimode case}

Consider a scenario with $n$ ON-OFF detectors. Without losing the generality, let's assume that the number of $n$ is even. Furthermore, let's assume that the $n$ ON-OFF detectors can be partitioned into $n/2$ adjacent pairs. Then, the observed  photon number moment will be: 
\begin{align}
\widetilde  N_{\rm ON-OFF, cor,n}=&\left\<\sum_{j=1}^{n/2}\widehat N_{\rm ON-OFF, cor,2,(j)}^h\bigotimes_{k=1\atop k\neq j}^{n/2} I_{2k-1}\otimes I_{2k} \right\>\\
=&\left\<\sum_{j=1}^{n/2}\widehat N_{\rm ON-OFF,2,(j)}\bigotimes_{k=1\atop k\neq j}^{n/2} I_{2k-1}\otimes I_{2k} \right\>\\
&-p\left\<\sum_{j=1}^{n/2}\left[\widehat N_{\rm ON-OFF,2,(j)}-\widehat N_{2,(j)}^{'}\right]\bigotimes_{k=1\atop k\neq j}^{n/2} I_{2k-1}\otimes I_{2k} \right\>\\
=&\left\<\sum_{j=1}^{n/2}\widehat N_{\rm ON-OFF,2,(j)}\bigotimes_{k=1\atop k\neq j}^{n/2} I_{2k-1}\otimes I_{2k} \right\>\\
&+p\left\<\sum_{j=1}^{n/2}\left(\widetilde I_{2j-1}\otimes |0\>\<0|_{2j}+|0\>\<0|_{2j-1}\otimes \widetilde I_{2j}\right)\bigotimes_{k=1\atop k\neq j}^{n/2} I_{2k-1}\otimes I_{2k} \right\>
\end{align}

Given a state with P-function $P(\alpha)$, and applying a balanced beamsplitter, one obtain the following observed photon with a bias vanishing in $\mathcal O(p)$: 
\begin{align}
\widetilde N_{\rm ON-OFF,cor,n}
&=\widetilde N_{\rm ON-OFF,n}-np \int \frac{\d^2 \alpha}{\pi} P(\alpha)  \left[1-\exp\left(-\frac{|\alpha|^2}{n}\right)\right]\exp\left(-\frac{|\alpha|^2}{n}\right) \\
&=\widetilde N_{\rm ON-OFF,n}-np\int \frac{\d^2 \alpha}{\pi} P(\alpha)  \left[1-\exp\left(-\frac{|\alpha|^2}{n}\right)\right]\left[1-\frac{|\alpha|^2}{n}+\mathcal O\left(\frac{|\alpha|^2}{n^2}\right)\right] \\
&=  \widetilde N_{\rm ON-OFF,n}+\mathcal O\left(p\right)+\mathcal O\left(\frac {p} n\right).
\end{align}

\end{widetext}

\end{document}